\newcommand{\HH}{\mathcal{H}}
\newcommand{\RR}{\mathcal{R}}
\title{Popularity in the generalized Hospital Residents setting}
\author{Meghana Nasre and Amit Rawat}
\institute{Indian Institute of Technology Madras, India}
\begin{document}
\maketitle

\begin{abstract}
We consider the problem of computing {\it popular} matchings in a bipartite graph $G = (\RR \cup \HH, E)$ where
$\RR$ and $\HH$ denote a set of residents and a set of hospitals respectively. Each hospital $h$ has a positive capacity
denoting the number of residents that can be matched to $h$. The residents and the hospitals
specify strict preferences over each other.
This is the well-studied Hospital Residents (HR) problem which is a generalization 
of the Stable Marriage (SM) problem.
The goal is to assign residents to hospitals {\it optimally} while respecting the capacities of the hospitals.
Stability is a well-accepted notion of optimality in such problems. However, motivated by the need for larger cardinality matchings, 
alternative notions of optimality like {\it popularity}
have been investigated in the  SM setting. In this paper,
we consider a generalized HR setting -- namely the Laminar Classified Stable Matchings (LCSM$^+$) problem. Here, additionally,
 hospitals can specify classifications over residents in their preference lists and classes have upper quotas. 
We show the following new results:
\noindent  We define a notion of popularity and give a structural characterization of popular matchings for the LCSM$^+$ problem.
 Assume $n = |\RR| + |\HH|$ and $m = |E|$. We give an $O(mn)$ time algorithm for computing a maximum cardinality popular matching in an LCSM$^+$ instance.
 We give an $O(mn^2)$ time algorithm for computing a matching that is popular amongst the maximum cardinality matchings in an LCSM$^+$ instance.
\end{abstract}

\section{Introduction}
\label{sec:intro}
Consider an academic institution where students credit an elective
course from a set of available courses.
Every student and every course rank a subset of elements from the other set in a
strict order of preference.
Each course has a quota denoting the maximum number of students
it can accommodate.
The goal is to allocate  to every student at most one course respecting the preferences.
 This is the well-studied Hospital Residents
problem~\cite{GS62}. 
We consider its generalization  where, in addition, a course
can {\it classify} students --
for
example, the students may be classified as under-graduates and post-graduates and department-wise and
so on. Depending on the classifications, a student may belong to multiple classes.
Apart from the total quota, each course now has a quota for every class. 
An allocation, in this setting, has to additionally respect the class quotas.
This is the Classified Stable Matching problem introduced by Huang~\cite{Huang2010}.

{\it Stability} is a de-facto notion of optimality in settings
where both set of participants have preferences. Informally,
an allocation of students to courses is  stable if no unallocated student-course pair
has incentive to deviate from the allocation. 
Stability is appealing for several reasons -- stable allocations
are guaranteed to exist, they are efficiently computable and all stable allocations leave the same
set of students unallocated~\cite{GI89}. However, it is known~\cite{Kavitha14} that the cardinality
of a stable allocation can be half the size of the largest sized allocation possible. Furthermore,
in applications like student-course allocation, leaving a large number of students unallocated is
undesirable. Thus, it is interesting to consider notions of optimality which respect preferences 
but possibly compromise stability in the favor of cardinality. Kavitha and Huang~\cite{HuangK11,Kavitha14}
investigated this in the Stable Marriage (SM) setting 
where they
considered {\it popularity} as an alternative to stability. At a high level, an allocation of students to courses
is {\it popular} if no {\it majority} wishes to deviate from the allocation. Here, 
we consider { popularity} in the context of two-sided preferences and one-sided
capacities with classifications.

We formally define our problem now -- we use the familiar hospital residents notation.
Let $G = (\RR \cup \HH, E)$ be a bipartite graph where $|\RR \cup \HH| = n$ and $|E| = m$.
Here $\RR$ denotes the set of residents, $\HH$ denotes the set of hospitals
and every hospital $h \in \HH$ has an upper quota  $q(h)$ denoting the
maximum number of residents $h$ can occupy.
A pair $(r, h) \in E$ denotes that $r$ and $h$ are mutually acceptable to each other.
Each resident (resp. hospital) has a strict ordering of a subset
of the hospitals (resp. residents) that are acceptable to him or her (resp. it).
This ordering is called the preference list of a vertex.
An assignment (or a matching) $M$ in $G$ is
a subset of $E$ such that every resident is assigned to at most one hospital and
a hospital $h$ is assigned at most $q(h)$ residents.
Let $M(r)$ (resp. $M(h)$) denote the hospital (resp. the set of residents)
which are assigned to $r$ (resp. $h$) in $M$.
A hospital $h$ is  under-subscribed if $|M(h)| < q(h)$.
A matching $M$  is {\it stable} if no unassigned pair $(r, h)$ wishes to deviate from $M$.
The goal is to compute a stable matching in $G$.
We denote it by HR$^+$ throughout the paper~\footnote{\scriptsize{We use HR$^+$ instead of HR
for consistency with other problems discussed in the paper.}}.
The celebrated
deferred acceptance algorithm by Gale and Shapley~\cite{GS62} proves that every instance of the HR$^+$ problem
admits a stable matching.

A generalization of the HR$^+$ problem is the Laminar Classified Stable Matching (LCSM) problem 
introduced by Huang~\cite{Huang2010}.
An instance of the LCSM$^+$ problem is an instance of the HR$^+$ problem where additionally,
each hospital $h$ is allowed
to specify a classification over the set of residents in its preference list.
A class $C^h_k$ of a hospital $h$ is a subset of residents in its preference list and
has an associated upper quota $q(C^h_k)$ denoting the maximum number of residents that can be matched to $h$ in $C^h_k$.
(In the LCSM problem~\cite{Huang2010}, classes can have lower quotas as well.)
We assume that the classes of a hospital form a {\it laminar} set. That is, for any two classes $C^h_j$ and $C^h_k$,
either the two classes are disjoint ($C^h_j \cap C^h_k  = \emptyset$),
or one is contained inside the other ($C^h_j \subset C^h_k$ or $C^h_k \subset C^h_j$). 
Huang suitably modified the classical definition of stability to account for the presence of these classifications.
He showed that every instance of the LCSM$^+$ problem admits a stable matching
which can be computed in $O(mn)$ time~\cite{Huang2010}.
A restriction of the LCSM$^+$ problem, denoted by Partition Classified Stable Matching (PCSM$^+$), is where
the classes of every hospital partition the residents in its preference list.

Motivated by the need to output larger cardinality matchings, we consider
computing {\it popular} matchings in the LCSM$^+$ problem. The notion of popularity uses {\it votes} to compare two matchings. Before we can define voting in the LCSM$^+$ setting,
it is useful to discuss voting in
the context of the SM problem.

\noindent {\bf Voting in the SM setting:}
Let $G = (\RR \cup \HH, E)$ be an instance of the SM problem and let $M$ and $M'$ be any two matchings in $G$.
A vertex $u \in \RR \cup \HH$ (where each hospital $h$ has $q(h) = 1$)
 prefers $M$ over $M'$ and therefore votes for $M$ over $M'$ if either
(i) $u$ is matched in $M$ and unmatched in $M'$ or
(ii) $u$ is matched in both $M$ and $M'$ and prefers $M(u)$ over $M'(u)$.
A matching $M$ is more popular than $M'$ if the number of votes that $M$ gets as compared to $M'$ is
greater than the number of votes that $M'$ gets as compared to $M$. A matching $M$ is popular
if there does not exist any matching that is more popular than $M$.
In the SM setting it is known that a stable matching is popular, however it was shown to be {\it minimum}
cardinality popular matching~\cite{HuangK11}.
Huang and Kavitha~\cite{HuangK11,Kavitha14} gave efficient algorithms
for computing a max-cardinality popular matching and a popular matching amongst max-cardinality matchings in an SM instance.

\noindent {\bf Voting in the capacitated setting:}
To extend voting in the capacitated setting, 
we assign a hospital $h$ as many votes as its upper
quota $q(h)$. This models the scenario in which hospitals with larger capacity get a larger share of votes.
For the HR$^+$ problem, a hospital $h$ compares the most
preferred resident in $M(h) \setminus M'(h)$ to the most preferred resident in $M'(h) \setminus M(h)$ (and
votes for $M$ or $M'$ as far as those two residents are concerned) and so on.
For this voting scheme, we can obtain analogous results for computing popular matchings
in the HR$^+$ problem via the standard technique of {\it cloning} (that is, 
creating $q(h)$ copies of a
hospital $h$ and appropriately modifying preference lists of the residents and hospitals
\footnote {\scriptsize{For every hospital in the cloned graph, its preference list is the same as
in the original instance.
For every hospital $h$, fix an ordering of its clones. The preference
list of a resident $r$ in the cloned instance is obtained by replacing the occurrence of $h$ by
the fixed ordering of its clones.
We refer the reader to ~\cite{BrandlK16,Kiraly11} for details.}}).
However, our interest is in the LCSM$^+$ problem, for which we are not aware of any reduction to the SM problem.
Furthermore, we show
that the straightforward voting scheme as defined in the HR$^+$ does not suffice for the LCSM$^+$ problem.
Therefore, we define a voting scheme for a hospital which takes into consideration the classifications
as well as ensures that every stable matching in the LCSM$^+$ instance is popular.
We show the following results: 
\begin{itemize}
\item 
We define a notion of popularity for the LCSM$^+$ problem.
Since our definition ensures that stable matchings are popular --
this guarantees the existence of popular matchings in the LCSM$^+$ problem.
\item 
We give a characterization of popular matchings for the LCSM$^+$ problem, 
which is a natural extension of the 
characterization of popular matchings in SM setting~\cite{HuangK11}.
\item 
We obtain the following algorithmic results.
An $O(m+n)$ (resp. $O(mn))$ time algorithm  for computing a maximum cardinality popular matching in a PCSM$^+$ (resp. LCSM$^+$) instance.
 An $O(mn)$ (resp. $O(mn^2)$) time algorithm for computing a popular matching amongst maximum cardinality matchings in a
PCSM$^+$ (resp. LCSM$^+$) instance.
\end{itemize}

Very recently, independent of our work, two different groups~\cite{BrandlK16,Kamiyama16}
have considered popular matchings in the one-to-many setting. Brandl and Kavitha~\cite{BrandlK16} have
considered computing {\it popular} matchings in the HR$^+$ problem.
In their work as well as ours, a hospital $h$ is assigned as many votes as
its capacity to compare two matchings $M$ and $M'$.
In contrast, by the definition of popularity in \cite{BrandlK16}, a hospital $h$
chooses the most adversarial ordering of residents in $M(h) \setminus M'(h)$
and $M'(h) \setminus M(h)$  for comparing $M$ and $M'$.
However, it is interesting to note that in an HR$^+$ instance the same matching is output by both our algorithms.
On the other hand, we remark that the model considered
in our paper is a more general one than the one considered in \cite{BrandlK16}.
Kamiyama~\cite{Kamiyama16} has generalized our work and the results in \cite{BrandlK16}
using a matroid based approach.


We finally remark that one can consider voting schemes where a hospital is given a {\it single} vote instead
of capacity many votes. In one such scheme, a hospital compares the set of residents in $M(h)$ and $M'(h)$ in
lexicographic order and votes accordingly. However, when such a voting is used, it is possible to construct instances
where a stable matching is {\it not popular}. The techniques in this paper use the fact that stable matchings
are popular, therefore it is unclear if our techniques will apply for such voting schemes.

\noindent{\bf Related Work:}
The notion of popularity was introduced by G\"{a}rdenfors~\cite{Gar75} in the context of stable matchings.
In \cite{AIKM07} Abraham~et~al. studied popularity in the one-sided preference list model.
As mentioned earlier, our work is inspired by a series of papers where popularity
is considered as an alternative to stability in the stable marriage setting
by Huang, Kavitha and Cseh \cite{CsehK16,HuangK11,Kavitha14}.
Bir\'o~et~al.~\cite{BiroMM10} give several practical scenarios where stability
may be compromised in the favor of size. 
The PCSM$^+$ problem is a special case of the
Student Project Allocation (SPA) problem studied by Abraham~et~al.~\cite{Abraham2007}.
They gave a linear time algorithm to compute a stable matching in an instance of the SPA problem.
In this paper, we use the algorithms of Abraham~et~al.~\cite{Abraham2007} and Huang~\cite{Huang2010}
for computing stable matchings in the PCSM$^+$ and LCSM$^+$ problems.
Both these algorithms follow the standard {\it deferred acceptance} algorithm
of Gale and Shapley with problem specific modifications.
We refer the reader to \cite{Abraham2007} and \cite{Huang2010} for
details.

\noindent {\bf Organization:} 
In Section~\ref{sec:stable-pop-lcsm} we define the
notion of popularity, in Section~\ref{sec:sec2} we present the structural characterization of popular matchings.
In Section~\ref{sec:max-card-pop-lcsm} we describe our algorithms to compute a
maximum cardinality popular matching, and a popular matching amongst maximum
cardinality matchings. 
We conclude with a short discussion about popular matchings in the LCSM
problem.

\section{Stability and popularity in the LCSM$^+$ problem}
\label{sec:stable-pop-lcsm}
Consider an instance $G = (\RR \cup \HH, E)$ of the LCSM$^+$ problem.
As done in \cite{Huang2010}, assume that for every $h \in \HH$ there is a class $C^h_{*}$ containing
all the residents in the preference list of $h$ and  $q(C^h_{*}) = q(h)$. For a hospital $h$, let $T(h)$ denote
the tree of classes corresponding to $h$ where $C^h_{*}$ is the root of $T(h)$. The leaf classes
in $T(h)$ denote the most refined classifications for a resident whereas as we move up in the tree from a leaf node to the root,
the classifications gets coarser.

To define stable matchings in the LCSM problem, Huang
introduced the notion of a {\it blocking group} w.r.t. a matching.
Later, Fleiner and Kamiyama~\cite{FleinerK2012} defined a blocking pair which is equivalent
to a blocking group of Huang. 
We use the definition of stability from \cite{FleinerK2012} which we recall below. 
A set $S = \{ r_1, \ldots, r_l \}$ is  {\it feasible} for a hospital $h$
if $|S| \le q(h)$ and for every class $C^h_j$ of $h$ (including the root class  $C^h_*$), we have
$|C_j^h \cap S| \le q(C_j^h)$.
A matching $M$ in $G$ is feasible if every resident is matched
to at most one hospital, and $M(h)$ is feasible for every hospital $h \in \mathcal{H}$.
A pair $(r, h) \notin M$ blocks $M$ iff both the conditions below hold:
\begin{itemize}
    \item $r$ is unmatched in $M$, or $r$ prefers $h$ over $M(r)$, and
    \item either the set $M(h) \cup \{r\}$ is feasible for $h$, or
        there exists a resident $r' \in M(h)$, such that $h$ prefers
        $r$ over $r'$, and $(M(h) \setminus \{r'\}) \cup \{r\}$ is feasible for $h$.
\end{itemize}

\noindent A feasible matching $M$ in $G$ is stable if $M$ does not admit any blocking pair.
\subsection{Popularity}
To define popularity, we need to specify how a hospital compares two sets $M(h)$ and $M'(h)$ in an LCSM$^+$ setting, where $M$ and $M'$ are
two feasible matchings in the instance.
\subsubsection {Illustrative example}
\label{sec:app:ex}
Consider the following LCSM$^+$ instance where
$\RR = \{ r_1, \ldots, r_4 \}$ and
$\HH = \{ h_1, \ldots, h_3\}$ and the preference lists of the residents and hospitals
are as given in Figure~\ref{fig:lcsm_instance}(a) and (b) respectively. The preferences can be read as follows: resident
$r_1$ has $h_1$ as his top choice hospital. Resident $r_2$ has $h_2$ as its top choice hospital followed by $h_1$ which is his second choice hospital and so on.
For $h \in \{h_2, h_3\}$ we have $q(h) = 1$ and both these
hospitals have a single class $C_*^h$ containing all the residents in the preference list of $h$ and
$q(C_*^h) = q(h)$.
For hospital $h_1$ we have $q(h_1) = 2$ and the classes provided by $h_1$
are $C_1^{h_1} = \{ r_1, r_2 \}, C_2^{h_1} = \{ r_3, r_4 \}, C_*^{h_1} = \{ r_1, r_2, r_3, r_4 \}$ with quotas as follows:
$q(C_1^{h_1}) = q(C_2^{h_1}) = 1$ and $q(C_*^{h_1}) = 2$.
We remark that the example in Figure~\ref{fig:lcsm_instance} is also a PCSM$^+$ instance.
Figure~\ref{fig:lcsm_instance}(c) shows the tree $T(h_1)$.

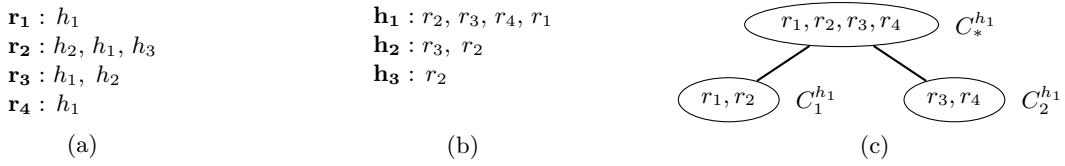
\begin{figure}[ht]
\begin{minipage}{0.3\linewidth}
\begin{center}
\begin{tabular}{cccc}
$\bf{r_1:}$&$ h_1$\\
$\bf{r_2:}$&$ h_2,$&$ h_1,$&$h_3$\\
$\bf{r_3:}$&$ h_1,$&$ h_2$\\
$\bf{r_4:}$&$ h_1$\\
\end{tabular}

\vspace{0.06in}
(a)
\end{center}
\end{minipage}
\begin{minipage}{0.3\linewidth}
\begin{center}
\begin{tabular}{ccccc}
$\bf{h_1:}$ &$r_2,$&$r_3,$& $r_4,$ & $r_1$\\
$\bf{h_2:}$ &$r_3,$&$r_2$& \\
$\bf{h_3:}$ &$r_2$\\
\vspace{0.01in}
\end{tabular}

\vspace{0.06in}
(b)
\end{center}
\end{minipage}
\begin{minipage}{0.35\linewidth}
\begin{center}
\begin{tikzpicture}{h}
    \node (A) [draw, ellipse] at (1.5, 2) {$ r_1, r_2, r_3, r_4 $};
    \node (B) [draw, ellipse] at (0, 1) {$ r_1, r_2 $};
    \node (C) [draw, ellipse] at (3, 1) {$ r_3, r_4 $};

    \node [right=0.1cm of A] {$C^{h_1}_*$};
    \node [right=0.1cm of B] {$C^{h_1}_1$};
    \node [right=0.1cm of C] {$C^{h_1}_2$};

    \draw[black, thick] (A) -- (B);
    \draw[black, thick] (A) -- (C);
\end{tikzpicture}

\vspace{0.01in}
(c)
\end{center}
\end{minipage}
\caption{(a) Resident preferences, (b) Hospital preferences, (c) $T(h_1)$.
The matchings $M =  \{(r_1, h_1), (r_2, h_2), (r_3, h_1)\}$,  $M' = \{(r_2, h_1), (r_3, h_2), (r_4, h_1)\}$,
and $M'' = \{(r_1, h_1), (r_2, h_3), (r_3, h_2), (r_4, h_1) \}$ are all feasible in the instance.}
\label{fig:lcsm_instance}
\end{figure}

Consider the two feasible matchings
$M$ and $M' $ defined in Fig.~\ref{fig:lcsm_instance}.
Note that $M$ is stable in the instance whereas the edge $(r_3, h_1)$ blocks $M'$.
While comparing $M$ and $M'$,
the vote for every vertex $u$ in the instance except $h_1$ is clear --
$u$ compares $M(u)$ with $M'(u)$ and
votes accordingly.
In order for $h_1$ to vote between $M$ and $M'$, the hospital compares between
$M(h_1) = \{r_1, r_3\}$ and $M'(h_1) = \{r_2, r_4\}$.
A straightforward way is to compare $r_3$ with $r_2$
(the most preferred resident in $M(h_1)$ to the most preferred resident in $M'(h_1)$)
and then compare $r_1$ with $r_4$ (second most preferred resident in $M(h_1)$ to second most preferred resident in $M'(h_1)$). Thus,
both the votes of $h_1$ are in favor  of $M'$ when compared with $M$.
Such a comparison has two issues -- (i) it ignores the classifications given by $h_1$,
and (ii) the number of votes that $M'$ gets when compared with $M$ is more than
the number of votes that $M$ gets as compared to $M'$. Therefore $M'$ is more popular than $M$ which
implies that $M$ (a stable matching) is {\bf not}  popular. 

We propose a comparison scheme for hospitals  which addresses both the issues.
In the above example, we note that $r_1 \in M(h)$ has a corresponding resident $r_2 \in M'(h)$ to be compared to
in one of the most refined classes $C_1^{h_1}$ (see Figure~\ref{fig:lcsm_instance}(c)).
Thus, we compare $r_1$ with $r_2$. The resident $r_3 \in M(h)$ is compared to $r_4 \in M(h)$ another leaf class $C_2^{h_1}$.
According to this comparison, $h_1$ is indifferent between
$M$ and $M'$ and $M'$ is no longer more popular than $M$.  Note that, although in the example, both the comparisons happen in a leaf
class, this may not be the case in a general instance.
Finally, we note that the matching $M''$ is a popular matching in the instance and is strictly larger in size than the stable matching $M$.


We formalize the above observations in the rest of the section.
To take into account the classifications, for a hospital $h$ and the matchings $M$ and $M'$,
we set up a correspondence between residents in $M(h) \setminus M'(h)$
and the residents in $M'(h) \setminus M(h)$.
That is, we define:  
\begin{eqnarray*}
{\bf corr}: M(h) \oplus M'(h) \rightarrow M(h) \oplus M'(h) \cup \{\bot\}
\end{eqnarray*}
For a resident $r \in M(h) \oplus M'(h)$ we denote by ${\bf corr}(r)$ the
corresponding resident to which $r$ gets compared when the hospital $h$ casts its votes. 
We let ${\bf corr}(r) = \bot$ if $r$ does not have a corresponding resident to
be compared to from the other matching.
The pseudo-code for the algorithm to compute the  {\bf corr} function is given below.
\label{sec:app-corr}
\begin{algorithm}[H]
\label{algo:find-correspondence-lcsm}
\begin{algorithmic}[1]
    \Procedure{Find-Correspondence}{$h, M, M'$}
        \State let $T(h)$ be the classification tree associated with $h$
        \State set ${\bf corr}(r) = \bot$ for each $r \in M(h) \oplus M'(h)$
        \State $Y = M(h) \setminus M'(h)$; $Y' = M'(h) \setminus M(h)$
        \While {$ Y \neq \emptyset$ and $Y' \neq \emptyset$}
            \For{each class $C_j^h$ in $T(h)$}
                \State $X_j =C_j^h  \cap Y$
                \State $X_j' = C_j^h \cap Y'$
            \EndFor
	    \State Let $C_f^h$ be one of the most refined classes for which
                $X_f \neq \emptyset$ and $X_f' \neq \emptyset$.
            \For {$k = 1, \ldots, \min(|X_f|, |X_f'|)$}
                    \State let $r$ be the $k$-th most preferred resident in $X_f$
                    \State let $r'$ be the $k$-th most preferred resident in $X'_f$
                    \State set ${\bf corr}(r) = r'$, and ${\bf corr}(r') = r$
		    \State $Y = Y \setminus \{r\}$; $Y' = Y' \setminus \{r'\}$
             \EndFor
        \EndWhile
    \EndProcedure
\end{algorithmic}
\caption{Correspondence between residents of $M(h)$ and $M'(h)$}
\end{algorithm}

The algorithm begins by setting {\bf corr} for every $r \in M(h) \oplus M'(h)$ to $\bot$. The algorithm
maintains two sets of residents $Y = M(h) \setminus M'(h) $ and $Y' = M'(h) \setminus M(h)$ for whom ${\bf corr}$ needs to be set.
As long as the sets $Y$ and $Y'$ are both non-empty, the algorithm repeatedly computes
for every class $C^h_j$ (including the root class $C^h_*$) the sets $X_j = C^h_j \cap Y$ and $X'_j = C^h_j \cap Y'$.
The algorithm then chooses one of the most refined classes, say $C^h_f$ in $T(h)$, for whom $X_f$ and $X'_f$ are both non-empty.
Finally, residents in $X_f$ and $X_f'$ are sorted according to the preference ordering of $h$ and
the ${\bf corr }$ of the $k$-th most preferred resident in $X_f$ is set to the
$k$-th most preferred resident in $X_f'$, where $k = 1, \ldots, \min\{|X_f|, |X_f'|\}$.

For $r \in \RR$, and any feasible matching $M$ in $G$, if $r$ is unmatched in $M$ then,
$M(r) = \bot$.
A vertex prefers any of its neighbours over $\bot$.
For a vertex $u \in \RR \cup \HH$, let $x, y \in N(u) \cup \{\bot\}$,
where $N(u)$ denotes the neighbours of $u$ in $G$.
\begin{eqnarray*}
vote_u(x, y) &=& +1 \ \ \ \ \mbox { if $u$ prefers  $x$ over $y$} \\ 
             &=& -1 \ \ \ \ \mbox { if $u$ prefers $y$ over $x$} \\
             &=& \ \  0 \ \ \ \ \mbox { if $x$ = $y$ }
\end{eqnarray*}
\noindent Using the above notation, the vote of a resident is easy to define --
a resident $r$ prefers $M'$ over $M$ iff the term $\mathcal{V}_r > 0$,
where $\mathcal{V}_r = vote_r(M'(r), M(r))$.

Recall that a hospital $h$ uses $q(h)$ votes to compare $M$ and $M'$.
Let $q_1(h) = |M(h) \cap M'(h)|$  (number of common residents assigned to $h$ in $M$ and $M'$) and 
$q_2(h) = q(h) - \max\{|M(h)|, |M'(h)|\}$ (number of unfilled positions of $h$ in both $M$ and $M'$). 
Our voting scheme ensures that $q_1(h) + q_2(h)$ votes of $h$ remain unused when comparing $M$ and $M'$.
A hospital $h$ prefers $M'$ over $M$ iff the term $\mathcal{V}_h > 0$, where $\mathcal{V}_h$ is defined as follows:
\begin{eqnarray*}
\mathcal{V}_h = (|M'(h)| - |M(h)|) +
\sum_{\substack {r \in M'(h) \setminus M(h) \\  { \&\& }\\ {\bf corr}(r) \neq \bot}} vote_h(r, {\bf corr}(r))
\end{eqnarray*}
The first term in the definition of $\mathcal{V}_h$ counts the votes of $h$  w.r.t. the residents from either $M$ or $M'$
that did not find correspondence. The second term counts the votes of $h$ w.r.t. the residents each of
which has a
corresponding resident from the other matching.
We note that in the SM setting, ${\bf corr}(r)$ will simply be $M(h)$. Thus,
our definition of votes in the presence of capacities is a natural generalization of the voting scheme in the SM problem.

Let us define the term $\Delta (M', M)$ as the difference between the votes that $M'$ gets over $M$
and the votes that $M$ gets over $M'$.
\begin{eqnarray*}
\Delta (M', M) = \sum_{r \in \RR} \mathcal{V}_r + \sum_{h \in \HH}\mathcal{V}_{h}
\end{eqnarray*}
\begin{definition}
A matching $M$ is popular in $G$ iff for every feasible matching $M'$, we have $\Delta(M', M) \le 0$.
\end{definition}
\subsection{Decomposing $M \oplus M'$}
Here, we present a simple algorithm which allows us to decompose edges of components of $M \oplus M'$ in
an instance into alternating paths and cycles. 
Consider the graph  $\tilde{G} = (\RR \cup \HH, M \oplus M')$, for any two feasible matchings $M$ and $M'$ in $G$.
We note that the degree of every resident in $\tilde{G}$ is at most 2 and
the degree of every hospital in $\tilde{G}$ is at most $2 \cdot q(h)$.
Consider any connected component $\mathcal{C}$ of $\tilde{G}$
and let $e$ be any edge in $\mathcal{C}$. We observe that it is possible
to construct a unique maximal $M$ alternating path or cycle $\rho$ containing
$e$ using the following simple procedure. Initially $\rho$ contains only the edge $e$.
\begin{enumerate}
\item Let $r \in \RR$ be one of the end points of the path $\rho$, and assume that $(r, M(r)) \in \rho$. 
We grow $\rho$ by adding the edge $(r, M'(r))$. 
Similarly if an edge from $M'$ is incident on $r$ in $\rho$, we grow the path by adding the edge $(r, M(r))$ if it exists.

\item Let $h \in \HH$ be one of the end points of the path $\rho$, and assume that $(r, h) \in M \setminus M'$
belongs to $\rho$.
We extend $\rho$ by adding $({\bf corr}(r), h)$ if ${\bf corr}(r)$ is not equal to $\bot$.
A similar step is performed if the last edge on $\rho$ is $(r, h) \in M' \setminus M$.
\item We stop the procedure when we complete a cycle (ensuring that the two adjacent residents of a hospital are ${\bf corr}$ for each other according to the hospital), or the path can no longer be extended.
Otherwise we go to Step~1 or Step~2 as applicable and repeat.
\end{enumerate}
The above procedure gives us a unique decomposition of a connected component in $\tilde{G}$
into alternating paths and cycles. Note that a hospital may appear multiple times in a single path or a cycle
and also can belong to more than one alternating paths and cycles.
Figure~\ref{fig:cycle-dec} gives an example of the decomposition of the two
feasible matchings in the instance in Figure~\ref{fig:lcsm_instance}.
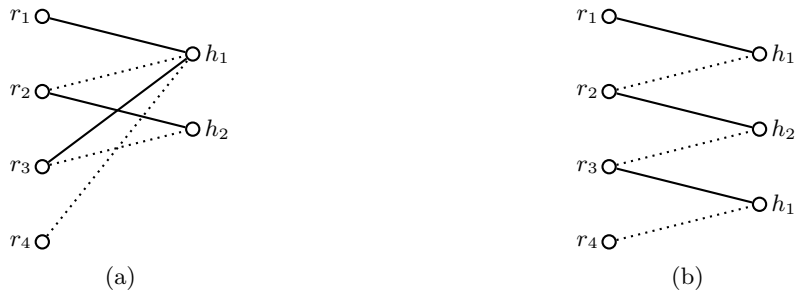
\begin{figure}[h]
\begin{minipage}{0.45\linewidth}
\centering
\begin{tikzpicture}[every node/.style={circle, draw, thick, inner sep=0pt, minimum width=5pt}]
\node (r_1)[label=left:$r_1$]  at (0, 4) {};
\node (r_2)[label=left:$r_2$] at (0, 3) {};
\node (r_3)[label=left:$r_3$] at (0, 2) {};
\node (r_4)[label=left:$r_4$] at (0, 1) {};
\node (h_1)[label=right:$h_1$] at (2, 3.5) {};
\node (h_2)[label=right:$h_2$] at (2, 2.5) {};

\draw[thick]         (r_1) -- (h_1);
\draw[thick]         (r_2) -- (h_2);
\draw[thick]         (r_3) -- (h_1);

\draw[thick, dotted] (r_2) -- (h_1);
\draw[thick, dotted] (r_3) -- (h_2);
\draw[thick, dotted] (r_4) -- (h_1);
\end{tikzpicture}

(a)
\end{minipage}
\begin{minipage}{0.45\linewidth}
\centering
\begin{tikzpicture}[every node/.style={circle, draw, thick, inner sep=0pt, minimum width=5pt}]
\node (r_1)[label=left:$r_1$]  at (0, 4) {};
\node (r_2)[label=left:$r_2$] at (0, 3) {};
\node (r_3)[label=left:$r_3$] at (0, 2) {};
\node (r_4)[label=left:$r_4$] at (0, 1) {};
\node (h_11)[label=right:$h_1$] at (2, 3.5) {};
\node (h_2)[label=right:$h_2$] at (2, 2.5) {};
\node (h_12)[label=right:$h_1$] at (2, 1.5) {};
\draw[thick]         (r_1) -- (h_11);
\draw[thick]         (r_2) -- (h_2);
\draw[thick]         (r_3) -- (h_12);

\draw[thick, dotted] (r_2) -- (h_11);
\draw[thick, dotted] (r_3) -- (h_2);
\draw[thick, dotted] (r_4) -- (h_12);

\end{tikzpicture}

(b)
\end{minipage}
\caption{$M$ and $M'$ are feasible matchings in the example as defined in Fig.~\ref{fig:lcsm_instance}. (a) $\tilde{G} = (\RR \cup \HH, M \oplus M')$; bold edges belong
    to $M$, dashed edges belong to $M'$.
    (b) shows the decomposition of the edges of the component of
    $\tilde{G}$ into a single path.}
\label{fig:cycle-dec}
\end{figure}

Let $\mathcal{Y}_{M \oplus M'}$ denote the collection of alternating paths and alternating cycles
obtained by decomposing every component of $\tilde{G}$.

We now state  a useful property about any alternating path or cycle in
$\mathcal{Y}_{M \oplus M'}$.
\begin{lemma}
\label{lem:rho_xor_feasible_g}
If $\rho$ is an alternating path or an alternating cycle in
$\mathcal{Y}_{M \oplus M'}$, then $M \oplus \rho$ is a feasible matching in $G$.
\end{lemma}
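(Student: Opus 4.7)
The plan is to verify that $M^* := M \oplus \rho$ is a feasible matching: every resident is matched to at most one hospital and, for every hospital $h$ and every class $C \in T(h)$, $|M^*(h) \cap C| \le q(C)$. Since each resident has degree at most two in $\tilde{G}$ and the construction of $\rho$ pairs the edges $(r, M(r))$ and $(r, M'(r))$ whenever both appear, the first condition is immediate: after applying $M \oplus \rho$, each resident is matched to $M(r)$, $M'(r)$, or $\bot$. The substance of the proof lies in verifying hospital feasibility.

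Fix $h$ and write $A = M(h) \cap M'(h)$, $B = M(h) \setminus M'(h)$, $B' = M'(h) \setminus M(h)$. Let $S \subseteq B$ and $S' \subseteq B'$ be the residents incident to $h$ on $\rho$'s $M$-edges and $M'$-edges, so $M^*(h) = A \cup (B \setminus S) \cup S'$. By Step~2 of the decomposition, the edges of $\rho$ at $h$ form ${\bf corr}$-paired blocks $\{(r,h),({\bf corr}(r),h)\}$, with the only exceptions being at endpoints of $\rho$ incident to $h$ where the last edge satisfies ${\bf corr}(r) = \bot$. For each class $C$ of $h$, define $\alpha(C)$ (resp.\ $\beta(C)$) as the number of ${\bf corr}$-pairs $(r,r')$ with $r \in B \cap C, r' \in B' \setminus C$ (resp.\ $r \in B \setminus C, r' \in B' \cap C$). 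The key structural claim is: for every class $C$, at most one of $\alpha(C), \beta(C)$ is positive, and if $\alpha(C) > 0$ then every resident of $B' \cap C$ has a ${\bf corr}$-partner (symmetric for $\beta(C)$). A crossing pair must be formed in an iteration of \textsc{Find-Correspondence} whose enclosing class $C_f$ strictly contains $C$ (by laminarity); the ``most refined'' rule then forces $C \cap Y = \emptyset$ or $C \cap Y' = \emptyset$ at that iteration. Because $Y$ and $Y'$ are monotonically non-increasing across iterations, an opposite-direction crossing occurring earlier or later would violate this emptiness, and an unpaired resident in $B' \cap C$ would otherwise remain in $Y'$ throughout the run.

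With the dichotomy in place, I expand
\[
|M^*(h) \cap C| - |M(h) \cap C| \;=\; \bigl(\beta_P(C) - \alpha_P(C)\bigr) + \bigl(|S'_u \cap C| - |S_u \cap C|\bigr),
\]
where $\alpha_P, \beta_P$ restrict to pairs chosen in $\rho$ and $S_u \subseteq S$, $S'_u \subseteq S'$ are the unpaired residents appearing as endpoints of $\rho$ at $h$. Since unpaired residents exist only on the larger of $B, B'$, a short case analysis on which of $\alpha(C), \beta(C)$ is positive, combined with the structural claim, yields $|M^*(h) \cap C| \le \max(|M(h) \cap C|, |M'(h) \cap C|) \le q(C)$. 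The subtle sub-case is $\beta(C) > 0$, where $\alpha(C) = 0$ and $|B_u \cap C| = 0$ give $|B \cap C| = \gamma(C)$ (the number of pairs wholly inside $C$), and combining with $|B' \cap C| = \gamma(C) + \beta(C) + |B'_u \cap C|$ telescopes the upper bound to $|A \cap C| + |B' \cap C| = |M'(h) \cap C|$. The principal obstacle is the structural dichotomy on $\alpha$ and $\beta$; once that is established, the remaining accounting is routine.
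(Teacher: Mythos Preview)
Your proof is correct, and it takes a somewhat different organizational route from the paper's. The paper argues locally: it fixes a single sub-path $\langle r', h, r\rangle$ of $\rho$ with $r' = {\bf corr}(r)$ and $(r,h) \in M$, and shows that the one-step swap $(M(h) \setminus \{r\}) \cup \{r'\}$ is feasible by observing that for any class $C^h_t$ containing $r'$ but not $r$, the ``most refined class'' rule in \textsc{Find-Correspondence} forces $|M(h)\cap C^h_t| < |M'(h)\cap C^h_t|$. It then asserts that the case where $h$ appears multiple times on $\rho$ ``follows from the arguments given above'', and does not explicitly treat endpoint edges with ${\bf corr}(r)=\bot$. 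Your route is more global: you isolate the dichotomy $\alpha(C)\cdot\beta(C)=0$ together with the depletion claim (if $\alpha(C)>0$ then $B'_u\cap C=\emptyset$, and symmetrically), and then carry out a uniform class-by-class accounting that simultaneously handles all occurrences of $h$ on $\rho$ as well as the unpaired endpoint edges. Both approaches ultimately rest on the same observation about the most-refined-class rule --- indeed, the paper's inequality $|M(h)\cap C^h_t| < |M'(h)\cap C^h_t|$ is precisely the $\beta(C^h_t)>0$ case of your dichotomy --- but your formulation makes the invariant explicit and completes the multi-swap and endpoint cases that the paper leaves to the reader, at the cost of heavier notation.
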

\begin{proof}
Let $\langle r', h, r \rangle$ be any sub-path of $\rho$,
where $r' = {\bf corr}(r)$, and $(r, h) \in M$.
We  prove that $(M(h) \setminus \{ r \}) \cup \{ r' \}$ is feasible for $h$.
\begin{figure}[ht]
\centering
\begin{tikzpicture}{h}
    \draw[dotted, black, thick] (3.5, 4) -- (2, 2);
    \draw[black, thick] (3.5, 4) -- (5.8, 1.5);

    \draw[dotted, black, thick] (1, 1) -- (2, 2);
    \draw[dotted, black, thick] (3, 1) -- (2, 2);
    \draw[dotted, black, thick] (2, 0) -- (2, 2);
    \draw[dotted, black, thick] (0, 0) -- (1, 1);
    \draw[dotted, black, thick] (4, 0) -- (3, 1);

    \node[draw,
        fill=white,
        shape border rotate=120,
        regular polygon,
        regular polygon sides=3,
        node distance=2cm,
        minimum height=6em] at (5.4, 1.5) {};

    \fill[white] (3.5, 4) ellipse (0.5 and 0.25);
    \fill[white] (2, 2) ellipse (0.5 and 0.25);
    \fill[white] (2, 0) ellipse (0.5 and 0.25);
    \fill[white] (1, 1) ellipse (0.5 and 0.25);
    \fill[white] (0, 0) ellipse (0.5 and 0.25);
    \fill[white] (3, 1) ellipse (0.5 and 0.25);
    \fill[white] (4, 0) ellipse (0.5 and 0.25);

    \draw[black, thick] (3.5, 4) ellipse (0.5 and 0.25) node[right, xshift=0.5cm] {$C^h_*$};
    \draw[black, thick] (2, 2) ellipse (0.5 and 0.25) node[right, xshift=0.5cm] {$C^h_k$};
    \draw[black, thick] (2, 0) ellipse (0.5 and 0.25);
    \draw[black, thick] (1, 1) ellipse (0.5 and 0.25);
    \draw[black, thick] (0, 0) ellipse (0.5 and 0.25) node[below, yshift=0.2cm] {$.., r, ..$};
    \draw[black, thick] (0, 0) ellipse (0.5 and 0.25) node[below, yshift=-0.3cm] {$C^h_i$};
    \draw[black, thick] (3, 1) ellipse (0.5 and 0.25) node[right, xshift=0.5cm] {$C^h_t$};
    \draw[black, thick] (4, 0) ellipse (0.5 and 0.25) node[below, yshift=0.3cm] {$.., r', ..$};
    \draw[black, thick] (4, 0) ellipse (0.5 and 0.25) node[below, yshift=-0.3cm] {$C^h_j$};
\end{tikzpicture}
\caption{The classification tree T(h) for a hospital $h$.}
\label{fig:tree}
\end{figure}
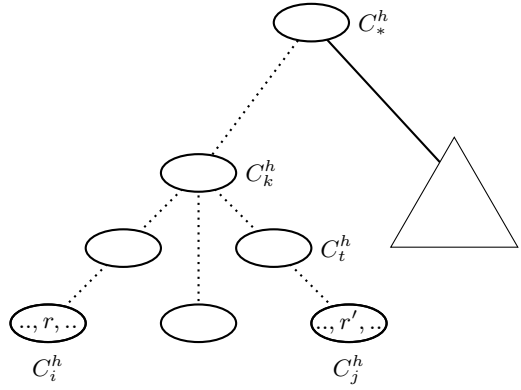
Let $C^h_i$ (resp. $C^h_j$) be the unique leaf class of $T(h)$
containing $r$ (resp. $r'$).
See Figure~\ref{fig:tree}.
We consider the following two cases:
\begin{itemize}
\item $r$ and $r'$ belong to the same leaf class in $T(h)$, i.e. $C^h_i = C^h_j$.
In this case, it is easy to note that $(M(h) \setminus \{r\}) \cup \{r'\}$ is feasible for $h$.
\item $r$ and $r'$ belong to different leaf classes of $T(h)$, i.e. $C^h_i \neq C^h_j$.
Observe that $|(M(h) \setminus \{r\}) \cup \{r'\}|$ can  violate the
upper quota only for those classes of $T(h)$ which contain $r'$ but do not contain $r$. 
Let $C^h_k$ be the least common ancestor of $C^h_i$ and $C^h_j$ in $T(h)$.
It suffices to look at any  class $C^h_t$ which lies in the path from
$C^h_k$ to $C^h_j$ excluding the class $C^h_k$ and show
that $|(M(h)  \cap C^h_t) \cup \{r'\}| \le q(C^h_t)$.
As $r' = {\bf corr}(r)$ and $r \notin C^h_t$, we claim that
$|M(h) \cap C^h_t| < |M'(h) \cap C^h_t| \le q(C^h_t)$. The first inequality is
due to the fact that $r'$ did not find a corresponding resident in the set $(M(h) \setminus M'(h)) \cap C^h_t$.
The second inequality is because $M'$ is feasible.
Thus,
$(M(h) \cap C^h_t) \cup \{ r' \}$ does not violate the upper quota for $C^h_t$.
Therefore $(M(h) \setminus \{ r \}) \cup \{ r' \}$ is feasible for $h$.
\end{itemize}
We note that the hospital $h$ may occur multiple times on $\rho$. Let 
$M(h)_\rho$ denote the set of residents matched to $h$ restricted to $\rho$.
To complete the proof of the Lemma, 
we need to prove that $(M(h) \setminus M(h)_\rho ) \cup M'(h)_\rho$ is feasible for $h$. 
The arguments for this
follow from the arguments given above.
\qed
\end{proof}

\noindent As was done in \cite{Kavitha14}, it is convenient to
label the edges of $M' \setminus M$ and use these labels to compute $\Delta (M', M)$.
Let $(r, h) \in M' \setminus M$; the label on $(r, h)$ is a tuple:
\begin{eqnarray*}
(vote_r(h, M(r)), \hspace{0.1in} vote_h(r, {\bf corr}(r)))
\end{eqnarray*}
Note that since we are labeling edges of $M'\setminus M$,
both entries of the tuple come from the set $\{-1, 1\}$.
With these definitions in place, we are ready to  give the structural characterization of popular matchings
in an LCSM$^+$ instance.



\section{Structural characterization of popular matchings}
\label{sec:sec2}


Let $G = (\RR \cup \HH, E)$ be an LCSM$^+$ instance and let
$M$ and $M'$ be two feasible matchings in $G$.
Using the ${\bf corr}$ function, we obtain a correspondence of residents in $M(h) \oplus M'(h)$ for
every hospital $h$ in $G$.
Let $\tilde{G} = (\RR \cup \HH, M \oplus M')$ and let $\mathcal{Y}_{M \oplus M'}$ denote
the collection of alternating paths and cycles obtained by decomposing every component of $\tilde{G}$. 
Finally, we label the edges of $M' \setminus M$ using appropriate votes.
The goal of these steps is to 
is to rewrite the term $\Delta(M', M)$ as a sum of labels on edges.

We note that the only vertices for whom their vote does not get captured on the edges of
$M' \setminus M$ are vertices that are matched in $M$ but not matched in $M'$.
Let $\mathcal{U}$ denote the multi-set of vertices that are end points of paths in $\mathcal{Y}_{M \oplus M'}$
such that there is no $M'$ edge incident on them. Note that the same hospital can
belong to multiple alternating paths and cycles in $\mathcal{Y}_{M \oplus M'}$,
therefore we need a multi-set. All vertices in $\mathcal{U}$ prefer $M$ over $M'$
and hence we add a $-1$ while capturing their vote in $\Delta(M', M)$.
We can write $\Delta(M', M)$ as:
\begin{eqnarray*}
 \Delta(M',M) =
   \sum_{x \in \mathcal{U} } -1  +
  \sum_{\rho \in \mathcal{Y}_{M \oplus M'}} \left( \sum_{\substack{(r,h) \in (M' \cap \rho)
					  }} \{vote_r(h, M(r))    + vote_h(r, {\bf corr}(r))\} \right)
\end{eqnarray*}
We now delete the edges labeled $(-1, -1)$ from all
paths and cycles $\rho$ in $\mathcal{Y}_{M \oplus M'}$. This
simply breaks paths and cycles into one or more paths.
Let this new collection of paths and cycles be denoted by $\tilde{\mathcal{Y}}_{M \oplus M'}$.
Let $\tilde{\mathcal{U}}$ denote the multi-set of vertices that are end points of paths in $\tilde{\mathcal{Y}}_{M \oplus M'}$
such that there is no $M'$ edge incident on them.
We rewrite $\Delta(M', M)$ as:
\begin{eqnarray*}
 \Delta(M',M) =
   \sum_{x \in \tilde{\mathcal{U}} } -1  +
  \sum_{\rho \in \tilde{\mathcal{Y}}_{M \oplus M'}} \left( \sum_{\substack{(r,h) \in (M' \cap \rho)
					  }} \{vote_r(h, M(r))    + vote_h(r, {\bf corr}(r))\} \right)
\end{eqnarray*}
Theorem below characterizes a popular matching.
\begin{theorem}
\label{thm:charc}
A feasible matching ${M}$ in ${G}$ is popular iff for any feasible matching $M'$ in $G$,
the set $\tilde{\mathcal{Y}}_{M \oplus M'}$ does not contain any of the following:
\begin{enumerate}
\item An alternating cycle with a  $(1, 1)$ edge,
\item An alternating path which has a $(1, 1)$ edge and starts with an unmatched
    resident in $M$ or a hospital which is under-subscribed in $M$.
\item An alternating path which has both its ends matched in $M$ and has two or more $(1, 1)$ edges.
\end{enumerate}
\end{theorem}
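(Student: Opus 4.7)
The plan is to exploit the edge-label rewriting established just before the theorem, namely $\Delta(M', M) = -|\tilde{\mathcal{U}}| + \sum_{\rho \in \tilde{\mathcal{Y}}_{M \oplus M'}} \sum_{(r,h) \in M' \cap \rho} (vote_r(h, M(r)) + vote_h(r, {\bf corr}(r)))$, and to analyze each component of $\tilde{\mathcal{Y}}_{M \oplus M'}$ separately. Since $(-1,-1)$ edges have been deleted, every $M'$-edge of a surviving component is labelled $(1,1)$, $(1,-1)$, or $(-1,1)$, contributing $+2$, $0$, and $0$ respectively to $\Delta(M', M)$. Writing $p$ for the number of $(1,1)$ edges of a component and $e$ for its number of endpoints in $\tilde{\mathcal{U}}$, a path then contributes exactly $2p - e$ and a cycle contributes $2p$.

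\textbf{Backward direction.} Assume none of the three forbidden substructures appears for any feasible $M'$, and fix $M'$. A cycle contributes $2p = 0$ by condition~(1). For a path $\rho$ I first identify the non-$\tilde{\mathcal{U}}$ endpoints with precisely the vertex types from condition~(2). A resident endpoint not in $\tilde{\mathcal{U}}$ must have its incident path-edge in $M'$, and since a resident has degree at most two in $\tilde{G}$ this forces it to be unmatched in $M$. A hospital endpoint $h$ not in $\tilde{\mathcal{U}}$ likewise has an incident $M'$-edge $(r, h)$ on $\rho$, and the path terminated because ${\bf corr}(r) = \bot$; tracing \textsc{Find-Correspondence} this forces the outer loop to exit with $Y = \emptyset$ and $r \in Y'$, so $|M(h) \setminus M'(h)| < |M'(h) \setminus M(h)|$ and hence $|M(h)| < |M'(h)| \le q(h)$. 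Now case-split on $e \in \{0, 1, 2\}$: if $e \le 1$, condition~(2) forces $p = 0$ so the contribution is $-e \le 0$; if $e = 2$, condition~(3) forces $p \le 1$ so the contribution is $2p - 2 \le 0$. Summing over all components gives $\Delta(M', M) \le 0$, proving popularity.

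\textbf{Forward direction.} I argue the contrapositive: given a forbidden substructure $\rho$ in $\tilde{\mathcal{Y}}_{M \oplus M'}$, construct a feasible $M'' := M \oplus \rho$ with $\Delta(M'', M) > 0$. A cycle in $\tilde{\mathcal{Y}}_{M \oplus M'}$ is also a cycle in $\mathcal{Y}_{M \oplus M'}$ (deletion of $(-1,-1)$ edges only breaks cycles into paths), so Lemma~\ref{lem:rho_xor_feasible_g} yields feasibility directly. For a sub-path, the same leaf-class counting argument used in Lemma~\ref{lem:rho_xor_feasible_g} carries through: the ${\bf corr}$-swaps along $\rho$ form a subset of the swaps in the larger path of $\mathcal{Y}_{M \oplus M'}$, and omitting swaps only loosens the class-quota constraints. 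Since $M \oplus M'' = \rho$, its decomposition contains $\rho$ alone, and one checks that the ${\bf corr}$-function for $(M, M'')$ at each hospital on $\rho$ pairs the same residents as $(M, M')$ did; hence the edge labels, and with them the numbers $p$ and $e$, are preserved. Plugging in: a case-(1) cycle yields $\Delta(M'', M) = 2p \ge 2$; a case-(2) path with $p \ge 1$ and $e \le 1$ yields $\Delta(M'', M) \ge 1$; and a case-(3) path with $p \ge 2$ and $e = 2$ yields $\Delta(M'', M) \ge 2$. Any of these contradicts popularity of $M$.

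\textbf{Main obstacle.} The delicate point is the label-preservation claim in the forward direction: restricting the input of \textsc{Find-Correspondence} from $(M, M')$ to $(M, M'')$ must produce the same pairings along $\rho$. This follows from an induction on the classification tree $T(h)$ that exploits the leaf-first greedy structure of the algorithm -- any residents dropped in passing from $M'$ to $M''$ were necessarily also left unpaired in the $(M, M')$ run -- but making it precise at every refined class requires some care. The remainder is a clean case analysis on $p$ and $e$.
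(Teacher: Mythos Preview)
Your backward direction is correct and in fact cleaner than the paper's: the paper argues by contradiction (assume $\Delta(M^*,M)>0$, locate a $\rho\in\mathcal{Y}_{M\oplus M^*}$ with positive contribution, delete its $(-1,-1)$ edges, and extract a forbidden fragment), whereas your direct per–component bookkeeping $2p-e$ with the case split on $e\in\{0,1,2\}$ gets to $\Delta(M',M)\le 0$ immediately. Your identification of the non-$\tilde{\mathcal{U}}$ endpoints with exactly the vertices appearing in condition~(2) is also correct and is the part the paper leaves implicit.

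The forward direction, however, has a real gap where you flagged it. The paper does \emph{not} switch along the fragment $\rho$: it explicitly goes back to the enclosing $\rho'\in\mathcal{Y}_{M\oplus M'}$, precisely ``since we have to ensure that the matching obtained is indeed feasible in the instance and the correspondences are maintained.'' Your sketched inductive justification for label preservation is wrong as stated: the residents dropped in passing from $M'$ to $M''$ are not those left unpaired in the $(M,M')$ run, they are exactly the residents whose $\mathbf{corr}$-partners lie outside $\rho$. Concretely, take $h$ with leaf classes $C_1=\{a\}$, $C_2=\{b,c,d\}$ and preference $c\succ b\succ d\succ a$; let $M(h)=\{a,c\}$, $M'(h)=\{b,d\}$. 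Then $\mathbf{corr}_{M,M'}$ pairs $c\leftrightarrow b$ (inside $C_2$) and $a\leftrightarrow d$ (inside $C_*$). If the fragment $\rho$ retains the internal pair $(a,d)$ at $h$ together with the $M$-edge $(c,h)$, but drops the $(-1,-1)$ edge $(b,h)$, then $M''(h)=\{d\}$ and \textsc{Find-Correspondence}$(h,M,M'')$ pairs $c\leftrightarrow d$, not $a\leftrightarrow d$. So neither the decomposition of $M\oplus M''$ nor the second label coordinate at $h$ is preserved (here $vote_h(d,a)=+1$ becomes $vote_h(d,c)=-1$), and your $2p-e$ count for $M''$ no longer matches the one for $\rho$. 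Using the full $\rho'$ as the paper does avoids this because every hospital occurrence in $\rho'$ carries a complete $\mathbf{corr}$-pair; the paper's own justification of why $M\oplus\rho'$ then beats $M$ is admittedly only a sketch, but your alternative route needs a genuinely different argument (e.g.\ bounding $\Delta(M'',M)$ directly rather than via transported labels) to close the gap.
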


\begin{proof}
\label{appendix:proof_struct_char}
We show that if $M$ is a feasible matching such that for any $M'$ the
set $\tilde{\mathcal{Y}}_{M \oplus M'}$ does not contain (1), (2), (3) as in Theorem~\ref{thm:charc},
then $M$ is popular in $G$.

Assume for the sake of contradiction that $M$ satisfies the conditions of Theorem~\ref{thm:charc},
and yet $M$ is not popular. Therefore there exists a feasible matching $M^*$ such that $\Delta (M^*, M) > 0$.
Consider the set ${\mathcal{Y}}_{M^* \oplus M}$. Recall that this set is a collection of paths
and cycles and the edges of $M^* \setminus M$ are labeled.
Let $\rho$ be any path or cycle in ${\mathcal{Y}}_{M^* \oplus M}$ and
let $\Delta(M^*, M)_{\rho}$ denote the difference between the votes of $M^*$ and $M$ when
restricted to the residents and hospitals in $\rho$. Since $\Delta (M^*, M) > 0$,
there exists a $\rho$ such that $\Delta(M^*, M)_{\rho} > 0$.
Note that $\rho$ is present in $\mathcal{Y}_{M^* \oplus M}$; using the presence of $\rho$ we establish
the existence of a $\rho' \in \tilde{\mathcal{Y}}_{M^* \oplus M}$ of the form (1), (2) or (3) which
contradicts our assumption.
We consider three cases depending on the structure of $\rho$.
\begin{enumerate}
\item {\bf $\rho$ is an alternating cycle or $\rho$ is an alternating path which starts and ends in an $M$ edge:} \\
Since $\rho \in  {\mathcal{Y}}_{M^* \oplus M}$,
and $\Delta(M^*, M)_\rho > 0$, it implies that there are more edges in $\rho$ labeled $(1, 1)$
than the number of edges labeled $(-1, -1)$.
We now delete the edges labeled $(-1, -1)$ from $\rho$; this breaks $\rho$ in to multiple alternating paths.
Note that each of these paths (say $\rho'$) start and end with an $M$ edge and are also present in $\tilde{\mathcal{Y}}_{M^* \oplus M}$.
Furthermore, since $\rho$ contained more number of edges labeled $(1,1)$ than the number of edges labeled
$(-1, -1)$, it is clear that there exists at least one $\rho'$ which has two edges labeled $(1,1)$. This
is a path of type (3) from the theorem statement and therefore contradicts our assumption that
$M$ satisfied the conditions of the theorem.

\item {\bf $\rho$ is an alternating path which starts or ends in an $M^*$ edge:}\\
The proof is similar to the previous case except that when we delete from $\rho$ the edges labeled $(-1, -1)$
we get paths $\rho' \in \tilde{\mathcal{Y}}_{M^* \oplus M}$ which are paths of type (2) or type (3) from the theorem
statement. This contradicts the assumption that $M$ satisfied the conditions of the theorem.
\end{enumerate}
This completes the proof of one direction of the Theorem. 
To prove the other direction, we prove the contrapositive of the statement.
That is, if for any feasible matching $M'$, $\tilde{\mathcal{Y}}_{M \oplus M'}$
contains (1), (2) or (3), then $M$ is not popular in $G$.
We first assume that $\rho \in \tilde{\mathcal{Y}}_{M \oplus M'}$ satisfying (1), (2), or (3) is also
present in ${\mathcal{Y}}_{M \oplus M'}$. Under this condition, it is possible to get a more popular matching
than $M$ by the following three cases.
\begin{itemize}
\item Let $M_2 = M \oplus \rho$ be a matching in $G$; by Lemma~\ref{lem:rho_xor_feasible_g}
we know that $M_2$ is
feasible in $G$. Comparing $M_2$ to $M$ yields two more votes for $M_2$.
Hence, $M_2$ is more popular than $M$.

\item If $\rho$ is an alternating path in $\tilde{\mathcal{Y}}_{M \oplus M'}$,
which has both its endpoints matched in $M$, and contains more than one edge labeled $(1,1)$.
Then similar to the case above $M_2 = M \oplus \rho$ is more popular than $M$.

\item If $\rho$ is an alternating path in $\tilde{\mathcal{Y}}_{M \oplus M'}$,
which has exactly one of its endpoints matched in $M$, and contains an edge labeled $(1,1)$,
then again $M_2 = M \oplus \rho$ is more popular than $M$.
\end{itemize}
Now let us assume that $\rho \in \tilde{\mathcal{Y}}_{M \oplus M'}$  is not present in ${\mathcal{Y}}_{M \oplus M'}$.
In such a case, $\rho$ is contained in a larger path or a cycle $\rho' \in {\mathcal{Y}}_{M \oplus M'}$ obtained by combining $\rho$ with
other paths in $\tilde{\mathcal{Y}}_{M \oplus M'}$
and adding the deleted $(-1, -1)$ edges. Using the larger path or cycle $\rho'$ we can construct a matching that is more
popular than $M$. Note that we need to use paths or cycles in ${\mathcal{Y}}_{M \oplus M'}$ to obtain another matching, since  we have to ensure that the matching obtained is indeed feasible
in the instance and the correspondences are maintained.
\qed
\end{proof}

We now prove that every stable matching in an LCSM$^+$ instance is popular.
\begin{theorem}
Every stable matching in an LCSM$^+$ instance $G$ is popular.
\label{lcsm-stable-popular-proof}
\end{theorem}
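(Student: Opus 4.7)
The plan is to apply the structural characterization of popular matchings established in Theorem~\ref{thm:charc}. I would let $M$ be any stable matching in $G$ and show that for every feasible matching $M'$, the decomposition $\tilde{\mathcal{Y}}_{M \oplus M'}$ contains none of the three forbidden structures. The common feature of those structures is that each contains at least one edge of $M' \setminus M$ labeled $(1,1)$, so it suffices to prove the stronger statement: for every feasible $M'$, no edge of $M' \setminus M$ in any component of $\tilde{G} = (\RR \cup \HH, M \oplus M')$ can carry the label $(1,1)$.

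The heart of the argument is to show that any such $(1,1)$-labeled edge $(r, h) \in M' \setminus M$ would itself constitute a blocking pair for $M$, contradicting stability. The first coordinate $vote_r(h, M(r)) = 1$ immediately gives that $r$ is unmatched in $M$ or strictly prefers $h$ to $M(r)$, which is exactly the first condition of the blocking-pair definition. For the second condition I would split on whether ${\bf corr}(r) = \bot$. In the case $r' = {\bf corr}(r) \neq \bot$, so that $r' \in M(h) \setminus M'(h)$, the second coordinate $vote_h(r, r') = 1$ asserts that $h$ strictly prefers $r$ to $r'$, and the feasibility of the swap $(M(h) \setminus \{r'\}) \cup \{r\}$ for $h$ is exactly what the first part of the proof of Lemma~\ref{lem:rho_xor_feasible_g} establishes for the sub-path $\langle r, h, r' \rangle$ appearing in the component containing this edge. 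When instead ${\bf corr}(r) = \bot$, I would argue the first disjunct of the feasibility condition, namely that $M(h) \cup \{r\}$ itself is feasible for $h$.

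The main obstacle lies in this second sub-case, which requires a direct analysis of the Find-Correspondence algorithm. I would observe that ${\bf corr}(r) = \bot$ for $r \in M'(h) \setminus M(h)$ forces the algorithm to terminate with $Y = \emptyset$; then, within the subtree of any class $C^h_t$ containing $r$, the presence of the unmatched $r$ in $Y'$ after processing $C^h_t$ and its descendants rules out any residual on the $Y$ side, so every resident of $(M(h) \setminus M'(h)) \cap C^h_t$ must have been matched to a partner inside $(M'(h) \setminus M(h)) \cap C^h_t$. A simple counting argument then yields $|M(h) \cap C^h_t| < |M'(h) \cap C^h_t| \le q(C^h_t)$ for every class $C^h_t$ with $r \in C^h_t$, leaving enough room to add $r$ to $M(h)$ without violating any class quota. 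Once both sub-cases are handled, the $(1,1)$ edge produces the promised blocking pair, contradicting the stability of $M$ and proving popularity.
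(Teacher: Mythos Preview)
Your proposal is correct and follows essentially the same route as the paper: show that a $(1,1)$-labeled edge $(r,h)\in M'\setminus M$ would be a blocking pair for $M$, invoking the feasibility statement from Lemma~\ref{lem:rho_xor_feasible_g} for the swap $(M(h)\setminus\{r'\})\cup\{r\}$ when $r'={\bf corr}(r)\neq\bot$. Your treatment is in fact more thorough than the paper's, since the published proof tacitly writes ``let $r={\bf corr}(r')$'' without separately addressing the possibility ${\bf corr}(r)=\bot$, whereas you explicitly dispatch that case with a direct counting argument on the classes containing $r$.
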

\begin{proof} 
Let $M$ be a stable matching in $G$. 
For any feasible matching $M'$ in $G$ consider the set 
$\mathcal{Y}_{M \oplus M'}$.
To prove that $M$ is stable, 
it suffices to show that there does not exist a path or cycle $\rho \in \mathcal{Y}_{M \oplus M'}$
such that an edge of $\rho$ is labeled $(1, 1)$.
For the sake of contradiction, assume that  $\rho$ is such a path or cycle, which has an edge
$(r', h) \in M' \setminus M$ labeled $(1, 1)$.
Let $r = {\bf corr}(r')$, where $(r, h) \in M \cap \rho$.
From the proof of Lemma~\ref{lem:rho_xor_feasible_g} we observe that $(M(h) \setminus \{r\}) \cup \{r'\}$ is feasible for $h$,
therefore the edge $(r', h)$ blocks $M$ contradicting the stability of $M$.
\qed
%
\end{proof}

\section{Popular matchings in LCSM$^+$ problem }
\label{sec:max-card-pop-lcsm}
In this section we present efficient algorithms for computing
(i) a maximum cardinality popular matching, and
(ii) a matching that is popular amongst all the maximum cardinality matchings
in a given LCSM$^+$ instance.
Our algorithms are inspired by the reductions of Kavitha and Cseh~\cite{CsehK16}
where they work with a stable marriage instance.
We describe a general reduction from an LCSM$^+$ instance $G$ to another
LCSM$^+$ instance $G_s$. Here $s = 2, \ldots, |\RR|$.
The algorithms for the two problems are obtained by choosing an appropriate value of $s$.

\noindent {\bf The graph $G_s$:}
Let $G = (\RR \cup \HH, E)$ be the input LCSM$^+$ instance.
The graph $G_s = (\RR_s \cup \HH_s, E_s)$ is constructed as follows:
Corresponding to every resident $r \in \RR$, we have $s$ copies of $r$,
call them $r^0, \ldots , r^{s-1}$ in $\RR_s$.
The hospitals in $\HH$ and their capacities remain unchanged; however
we have additional dummy hospitals each of capacity $1$.
Corresponding to every resident $r \in \RR$, we have $(s - 1)$ dummy hospitals
$d_{r}^0, \ldots, d_{r}^{s-2}$ in $\HH_s$.
Thus,
\begin{eqnarray*}
\RR_s = \{\ r^0, \ldots , r^{s-1} \  \ |  \ \forall r \in \RR \} ; \ \
\HH_s = \HH \cup \{\ d_{r}^0, \ldots, d_{r}^{s-2} \ \ | \ \forall r \in \RR \}
\end{eqnarray*}
We use the term level-$i$ resident for a resident $r^i \in \RR_s$ for $0 \le i \le s-1$.
The preference lists corresponding to $s$ different residents of $r$ in $G_s$
are:
\begin{itemize}
\item For a level-$0$ resident $r^0$, its preference list in $G_s$
is the preference list of $r$ in $G$, followed by the dummy hospital $d_{r}^0$.

\item For a level-$i$ resident $r^i$, where $ 1 \le i \le s-2$,
its preference list  in $G_s$ is $d_{r}^{i-1}$ followed by preference list
of $r$ in $G$, followed by $d_{r}^i$.

\item For a level-$(s-1)$ resident $r^{s-1}$, its preference list in $G_s$
is the dummy hospital $d_{r}^{s-2}$ followed by the preference list of $r$ in $G$.
\end{itemize}
The preference lists of hospitals in $G_s$ are as follows.
\begin{itemize}
\item The preference list for a dummy hospital $d_{r}^i$ is $r^i$
followed by $r^{i+1}$. 
\item For $h \in \HH$, its preference list in $G_s$, has level-$(s-1)$ residents
followed by level-$(s-2)$ residents, so on upto the level-$0$ residents
in the same order as in $h$'s preference list in $G$.
\end{itemize}

\noindent Finally, we need to specify the classifications of the hospitals in $G_s$.
For every class $C_i^h$ in the instance $G$, we have a corresponding class
$\bar{C}_i^h = \bigcup_{r \in C_i^h}{\{r^0, \ldots, r^{s-1}\}}$ in $G_s$, such
that $q(\bar{C}_i^h) = q(C_i^h)$.
We note that $|\bar{C}_i^h| = s \cdot |C_i^h|$.
Let $M_s$ be a stable matching in $G_s$.
Then $M_s$ satisfies the following properties:
\begin{itemize}
\item [($\mathcal{I}_1$)] Each $d_r^i \in \HH_s$ for $0 \le i \le s-2$,
is matched to one of $\{r^i, r^{i+1}\}$ in $M_s$.

\item [($\mathcal{I}_2$)] The above invariant implies that for every $r \in \RR$
at most one of $\{r^0, \ldots, r^{s-1}\}$ is assigned to a non-dummy hospital in $M_s$.

\item [($\mathcal{I}_3$)] For a resident $r \in \RR$, if $r^i$ is
matched to a non-dummy hospital in $M_s$, then
for all $  0 \le j \le i-1$, $M_s(r^j) = d_r^j$.
Furthermore,
for all $ i+1 \le p \le s-1 $, $M_s(r^p) = d_r^{p-1}$.
This also implies that in $M_s$ all residents $r^0, \ldots, r^{s-2}$  are matched
and only $r^{s-1}$ can be left unmatched in $M_s$.
\end{itemize}

These invariants allow us to naturally map the stable matching $M_s$
to a feasible matching $M$ in $G$. We define a function $map(M_s)$ as follows.
\begin{eqnarray*}
M = map(M_s) = \{(r, h) : h \in \HH  \mbox { and } (r^i, h) \in M_s \mbox { for exactly one of } 0 \le i \le s-1 \}
\end{eqnarray*}

\noindent We outline an algorithm that computes a feasible
matching in an LCSM$^+$ instance $G$.
Given $G$ and $s$, construct the graph $G_s$ from $G$.
Compute a stable matching $M_s$ in $G_s$. If $G$ is an LCSM$^+$ instance we use
the algorithm of Huang~\cite{Huang2010} to compute a stable matching in $G$.
If $G$ is a PCSM$^+$ instance, it is easy to observe that $G_s$ is also a PCSM$^+$ instance.
In that case, we use the algorithm of Abraham~et~al.\cite{Abraham2007}
to compute a stable matching.
(The SPA instance is different from a PCSM$^+$ instance, however, there is a easy reduction from the PCSM$^+$ instance to SPA,
we give the reduction (refer Appendix~\ref{appendix:reduction_pcsm_spa}) for the sake of completeness.). 
We output $M = map(M_s)$ whose feasibility is guaranteed by the invariants mentioned earlier.
The complexity of our algorithm depends on $s$ and the
time required to compute a stable matching in the problem instance.

%

In the rest of the paper, we denote by $M$ the matching obtained
as $map(M_s)$ where $M_s$ is a stable matching in $G_s$.
For any resident $r_i \in \RR$, we define
 \begin{eqnarray*}
map^{-1}(r_i, M_s) &=&  r_i^{j_i} \ \ \ \ \  \mbox{ where $0 \le j_i \le s-1$ and $M_s(r_i^{j_i})$ is a non-dummy hospital}\\
 &=& r_i^{s-1}   \ \ \ \mbox { otherwise.}
\end{eqnarray*}
Recall by Invariant ($\mathcal{I}_3$), exactly one of the level copy of $r_i$ in $G_s$ is matched to a non-dummy hospital in $M_s$.
For any feasible matching $M'$ in $G$ consider the set $\mathcal{Y}_{M \oplus M'}$ --
recall that this is a collection of $M$ alternating paths and cycles in $G$.
For any path or cycle $\rho$ in $\mathcal{Y}_{M \oplus M'}$, let us denote
by $\rho_{s} = map^{-1}(\rho, M_s)$ the path or cycle in $G_s$ obtained by
replacing every resident $r$ in $\rho$ by $map^{-1}(r, M_s)$.
Recall that if a resident $r$ is present in the class $C^h_j$
defined by a hospital $h$ in $G$, then in the graph $G_s$, $r^i \in \bar{C}^h_j$
for $i = 0, \ldots, s-1$.
The $map^{-1}$ function maps a resident $r$ in $G$ to a
unique level-$i$ copy in $G_s$.
Using Lemma~\ref{lem:rho_xor_feasible_g} and these observations we get the
following corollary.
\begin{corollary}
\label{lem:rho_xor_feasible_gs}
Let $\rho$ be an alternating path or an alternating cycle in
$\mathcal{Y}_{M \oplus M'}$, then $M_s \oplus \rho_s$ is a feasible matching
in $G_s$, where $\rho_s = map^{-1}(\rho, M_s)$.
\end{corollary}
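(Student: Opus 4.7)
The plan is to lift the feasibility statement from $G$ to $G_s$ by using the invariants $\mathcal{I}_1$--$\mathcal{I}_3$ together with Lemma~\ref{lem:rho_xor_feasible_g}. First I would verify that $\rho_s$ is a well-formed alternating structure in $G_s$. Since $M = map(M_s)$, invariant $\mathcal{I}_3$ singles out, for every resident $r$ appearing on $\rho$, a unique level copy $r^{map^{-1}(r,M_s)}$ that is matched to a non-dummy hospital in $M_s$. Every $M$-edge $(r,h)$ on $\rho$ corresponds to the $M_s$-edge $(r^{map^{-1}(r,M_s)}, h)$, and every $M'$-edge $(r,h)$ on $\rho$ corresponds to a non-$M_s$ edge in $G_s$ on the same copy (this edge exists in $G_s$ because every level copy of $r$ carries $r$'s preference list over non-dummy hospitals). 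Thus $\rho_s$ alternates between $M_s$ and $E_s \setminus M_s$ exactly as $\rho$ does between $M$ and $M'$.

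Next I would check feasibility of $M_s \oplus \rho_s$ on the resident side and on the dummy hospitals. Only the copies $r^{map^{-1}(r,M_s)}$ appearing in $\rho_s$ have their $M_s$-partners changed, so each resident copy is still matched to at most one hospital, and every other level copy $r^j$ as well as every dummy hospital $d_r^k$ keeps its $M_s$-match. Consequently invariants $\mathcal{I}_1$ and $\mathcal{I}_2$ still hold, and dummy hospitals (each of capacity $1$) remain feasible trivially.

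The main feasibility check is at each non-dummy hospital $h \in \HH$ touched by $\rho_s$. The key observation is that the classification in $G_s$ is a clean lift: $\bar C^h_j = \bigcup_{r \in C^h_j}\{r^0,\dots,r^{s-1}\}$ with $q(\bar C^h_j)=q(C^h_j)$, and by $\mathcal{I}_2$ each $r \in C^h_j$ contributes at most one copy to $M_s(h) \cap \bar C^h_j$. Hence
\[
|M_s(h) \cap \bar C^h_j| \;=\; |M(h) \cap C^h_j|,
\]
and because the swap on $\rho_s$ at $h$ is an exact mirror of the swap on $\rho$ at $h$ (the same residents are added/removed, one level copy at a time), the identity persists after the exchange:
\[
|(M_s \oplus \rho_s)(h) \cap \bar C^h_j| \;=\; |(M \oplus \rho)(h) \cap C^h_j|.
\]
Lemma~\ref{lem:rho_xor_feasible_g} gives the right-hand side is at most $q(C^h_j) = q(\bar C^h_j)$ for every class of $h$, including the root class $\bar C^h_*$ which handles the total quota $q(h)$. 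Therefore all class and capacity constraints at $h$ are respected in $M_s \oplus \rho_s$.

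I do not expect a genuine obstacle; the corollary is essentially a translation through the reduction. The only delicate point is confirming that the specific level copy selected by $map^{-1}$ is the correct one to place into $\rho_s$ so that the $M_s$-edges of $\rho_s$ are indeed $M_s$-edges, and this is exactly what invariant $\mathcal{I}_3$ provides.
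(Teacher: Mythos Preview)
Your proposal is correct and follows exactly the approach the paper intends: the paper offers no formal proof and simply remarks that the corollary follows from Lemma~\ref{lem:rho_xor_feasible_g} together with the observations that $r^i \in \bar C^h_j$ whenever $r \in C^h_j$ and that $map^{-1}$ selects a unique level copy of each resident. Your write-up is a faithful, more detailed unpacking of that one-line justification, with the key identity $|(M_s \oplus \rho_s)(h) \cap \bar C^h_j| = |(M \oplus \rho)(h) \cap C^h_j|$ making the lift to $G_s$ explicit.
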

The following technical lemma is useful in proving the properties of the matchings produced by our algorithms.
\begin{lemma}
\label{lem:blocking-pair-ms}
Let $\rho$ be an alternating path or an alternating cycle in
$\mathcal{Y}_{M \oplus M'}$, and $\rho_s = map^{-1}(\rho, M_s)$.
\begin{enumerate}
\item There cannot be any edge labeled $(1, 1)$ in $\rho_s$.
\item Let $\langle r_a^{j_a}, h, r_b^{j_b} \rangle$ be a sub-path of
$\rho_s$, where $h = M_s(r_b^{j_b})$.
Then, the edge $(r_a^{j'_a}, h) \notin \rho_s$ cannot be labeled $(1, 1)$,
where $j'_a < j_a$.
\end{enumerate}
\end{lemma}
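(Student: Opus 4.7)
\bigskip
\noindent\textbf{Proof plan for Lemma~\ref{lem:blocking-pair-ms}.}
The plan is to derive both parts from the stability of $M_s$ in $G_s$, combined with the feasibility guarantee of Corollary~\ref{lem:rho_xor_feasible_gs}. The common template is: assume a $(1,1)$ label exists, construct a candidate swap at $h$ that is feasible in $G_s$, and conclude that this would produce a blocking pair for $M_s$, contradicting stability.

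\smallskip
\noindent\emph{Part 1.} Suppose for contradiction that some edge $(r_a^{j_a},h) \in M'\setminus M$ on $\rho_s$ is labeled $(1,1)$. Let $r_b^{j_b} = \mathbf{corr}(r_a^{j_a})$ in $G_s$, so $(r_b^{j_b},h) \in M_s$ and $r_b^{j_b}$ is adjacent to $r_a^{j_a}$ in $\rho_s$. By Corollary~\ref{lem:rho_xor_feasible_gs} applied to the sub-path $\langle r_b^{j_b}, h, r_a^{j_a}\rangle$ (or via Lemma~\ref{lem:rho_xor_feasible_g} in $G_s$), the set $(M_s(h) \setminus \{r_b^{j_b}\}) \cup \{r_a^{j_a}\}$ is feasible for $h$. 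The first coordinate of the label gives $r_a^{j_a}$ preferring $h$ over $M_s(r_a^{j_a})$, and the second gives $h$ preferring $r_a^{j_a}$ over $r_b^{j_b}$. Hence $(r_a^{j_a},h)$ is a blocking pair for $M_s$ in $G_s$, contradicting the stability of $M_s$.

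\smallskip
\noindent\emph{Part 2.} Consider the sub-path $\langle r_a^{j_a}, h, r_b^{j_b}\rangle$ with $h = M_s(r_b^{j_b})$ and a hypothetical edge $(r_a^{j_a'},h)$ with $j_a' < j_a$, whose label would be $\bigl(vote_{r_a^{j_a'}}(h,M_s(r_a^{j_a'})),\; vote_h(r_a^{j_a'}, r_b^{j_b})\bigr)$ (the second coordinate uses $r_b^{j_b}$ since, were $r_a^{j_a'}$ to take the role of $r_a^{j_a}$ at $h$, its correspondent in the swap would still be $r_b^{j_b}$). I would argue:
\begin{enumerate}
\item Since $map^{-1}(r_a,M_s) = r_a^{j_a}$, invariant $(\mathcal{I}_3)$ forces $M_s(r_a^{j_a'}) = d_{r_a}^{j_a'}$; by the preference list construction of $r_a^{j_a'}$, every non-dummy hospital on its list is preferred to $d_{r_a}^{j_a'}$, so the first coordinate is automatically $+1$.
\item Because the classes in $G_s$ are defined by $\bar{C}_i^h = \bigcup_{r \in C_i^h} \{r^0,\dots,r^{s-1}\}$, the copies $r_a^{j_a}$ and $r_a^{j_a'}$ belong to exactly the same classes of $h$. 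Hence the feasibility argument of Lemma~\ref{lem:rho_xor_feasible_g}, which used that replacing $r_b^{j_b}$ by $r_a^{j_a}$ preserves all class quotas, works verbatim to show that $(M_s(h)\setminus\{r_b^{j_b}\}) \cup \{r_a^{j_a'}\}$ is feasible for $h$.
\item If the second coordinate were also $+1$, then $(r_a^{j_a'},h)$ would satisfy the blocking-pair definition for $M_s$ in $G_s$, contradicting stability.
\end{enumerate}

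\smallskip
\noindent The main obstacle to articulate cleanly is item 2, namely that the feasibility of the swap carries over from $r_a^{j_a}$ to $r_a^{j_a'}$; once the laminar-class structure in $G_s$ is observed to be level-oblivious, the rest of the argument is a direct reduction to the stability of $M_s$. (The level ordering on $h$'s preference list, with higher levels preferred, is not even needed for Part~2; it will be the key ingredient in subsequent arguments that exploit this lemma, but for the labeling itself the blocking-pair contradiction via the feasible swap is sufficient.)
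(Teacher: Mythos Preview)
Your proposal is correct and follows essentially the same route as the paper's own proof: for Part~1 you invoke Corollary~\ref{lem:rho_xor_feasible_gs} to get feasibility of the swap $(M_s(h)\setminus\{r_b^{j_b}\})\cup\{r_a^{j_a}\}$ and then read off a blocking pair from the $(1,1)$ label; for Part~2 you use that $r_a^{j_a}$ and $r_a^{j_a'}$ lie in exactly the same classes of $h$ in $G_s$ to transfer feasibility of the swap to the lower-level copy, again yielding a blocking pair. Your additional observations---that the first coordinate of the label on $(r_a^{j_a'},h)$ is automatically $+1$ by invariant~$(\mathcal{I}_3)$, and your explicit clarification of what ``labeled $(1,1)$'' means for an edge outside $\rho_s$---are correct and helpful but not strictly required; the paper simply assumes the $(1,1)$ label and derives the contradiction directly.
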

\begin{proof}
\label{lem2:proof}
Let $\langle r_a^{j_a}, h, r_b^{j_b} \rangle$ be a sub-path of
$\rho_s$, where $h = M_s(r_b^{j_b})$
(Figure~\ref{fig:ctree-alt-path}).
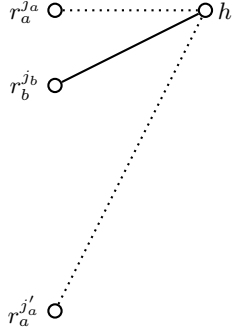
\begin{figure}[ht]
\centering
\begin{tikzpicture}[every node/.style={circle, draw, thick, inner sep=0pt, minimum width=5pt}]
\node (r_aa)[label=left:$r_a^{j_a}$] at (0, 4) {};
\node (r_bb)[label=left:$r_b^{j_b}$] at (0, 3) {};
\node (r_ac)[label=left:$r_a^{j'_a}$] at (0, 0) {};
\node (h)[label=right:$h$] at (2, 4) {};

\draw[thick, dotted] (r_aa) -- (h);
\draw[thick]         (r_bb) -- (h);
\draw[thick, dotted] (r_ac) -- (h);
\end{tikzpicture}
\caption{The edges $(r_a^{j_a}, h)$ and $(r_b^{j_b}, h)$ belong to $\rho$,
while the edge $(r_a^{j'_a}, h)$ does not belong to $\rho$.}
\label{fig:ctree-alt-path}
\end{figure}

As $M_s \oplus \rho_s$ is feasible in $G_s$ (Corollary~\ref{lem:rho_xor_feasible_gs}),
the set $(M_s(h) \setminus \{ r_b^{j_b} \}) \cup \{ r_a^{j_a} \}$ is feasible
for $h$ in $G_s$. Now since $(r_a^{j_a}, h)$ is labeled $(1, 1)$, the edge
$(r_a^{j_a}, h)$ blocks $M_s$
contradicting its stability. This proves (1). To prove (2),
assume that the edge $(r_a^{j'_a}, h) \notin \rho_s$ is labeled $(1, 1)$.
The residents $r_a^{j_a}$ and $r_a^{j'_a}$ belong to the same class (say $\bar{C}^h_k$)
in $G_s$, hence
$(M_s(h) \setminus \{ r_b^{j_b} \}) \cup \{ r_a^{j'_a} \}$
is  feasible for $h$.
Thus the edge $(r_a^{j'_a}, h)$ blocks $M_s$
contradicting its stability.
\end{proof}

\subsection{Maximum cardinality popular matching} 
\label{sec:sec3}
Let $G = (\RR \cup \HH, E)$ be an instance of the LCSM$^+$ problem where we are
interested in computing a maximum cardinality popular matching.
We use our generic reduction with the value of the parameter $s = 2$.
Since $G_2$ is linear in the size of $G$,
and a stable matching in an LCSM$^+$ instance can be computed in $O(mn)$ time~\cite{Huang2010},
we obtain an $O(mn)$ time algorithm to compute a maximum cardinality popular matching in $G$.
In case $G$ is a PCSM$^+$ instance, we use the linear time algorithm in
\cite{Abraham2007} for computing a stable matching to get a linear time algorithm
for our problem.
The proof of correctness involves two things -- we first show that
$M$ is popular in $G$. We then argue that it is the largest size popular matching in $G$.
We state the main theorem of this section below.
\begin{theorem}
\label{thm:pop1}
Let $M = map(M_2)$ where $M_2$ is a stable matching in $G_2$. Then $M$ is a maximum
cardinality popular matching in $G$.
\end{theorem}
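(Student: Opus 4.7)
The plan is to prove the theorem in two parts: first, that $M$ is popular in $G$, and second, that no popular matching in $G$ has cardinality larger than $|M|$. For popularity I would proceed by the contrapositive of Theorem~\ref{thm:charc}: assume $M$ is not popular and a feasible $M'$ yields a forbidden pattern in $\tilde{\mathcal{Y}}_{M \oplus M'}$ (a cycle with a $(1,1)$ edge, or one of the two bad-path types). Let $\rho$ be the offending path or cycle, so it carries at least one edge $(r,h) \in M' \setminus M$ with $G$-label $(1,1)$. Lift to $\rho_s = map^{-1}(\rho, M_s)$ in $G_2$; by Corollary~\ref{lem:rho_xor_feasible_gs}, $M_s \oplus \rho_s$ is feasible in $G_2$. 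The key observation is that the resident part of the label is invariant under the lift: every copy $r^i$ inherits $r$'s preference order on non-dummy hospitals, and $M_s(r^i)$ coincides with $M(r)$ whenever $r$ is matched in $M$ (otherwise $r^i = r^1$ is unmatched in $M_s$). Hence a $(1,1)$ $G$-label on $(r,h)$ can fail to lift to a $(1,1)$ $G_2$-label on $(r^i, h)$ only if the hospital vote flips from $+1$ to $-1$; because $G_2$'s hospital preferences place level-$1$ before level-$0$ within each class and agree with $G$ within a level, this flip forces $(r^0, M(r)) \in M_s$ and $({r'}^1, h) \in M_s$, where $r' = {\bf corr}(r)$. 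In each of the three bad-pattern cases I would argue: either some $(1,1)$ label survives under the lift and contradicts Lemma~\ref{lem:blocking-pair-ms}(1), or every $(1,1)$ flips and the rigid level configuration forcing the flip, combined with Lemma~\ref{lem:blocking-pair-ms}(2), produces a blocking edge $(r^{j'}, h) \notin \rho_s$ for $M_s$, contradicting its stability.

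For the maximum-cardinality claim, assume a popular matching $M^\star$ satisfies $|M^\star| > |M|$. In $\mathcal{Y}_{M \oplus M^\star}$, cycles contribute zero to the cardinality difference and paths contribute $\pm 1$ according to which matching owns the endpoints; so there must exist an alternating path $\rho$ with one more $M^\star$-edge than $M$-edge, and by alternation $\rho$ has one resident endpoint $r_0$ unmatched in $M$ and one hospital endpoint $h_t$ undersubscribed in $M$. On the terminal edge $(r, h_t) \in M^\star \setminus M$, the fact that $h_t$ is undersubscribed forces ${\bf corr}(r) = \bot$, so $vote_{h_t}(r, \bot) = +1$; if in addition $r$ prefers $h_t$ over $M(r)$ (or is itself unmatched in $M$), then $(r, h_t)$ has label $(1,1)$ in $\rho$ and $\rho$ is a type-(2) bad path witnessing non-popularity of $M$, contradicting the first part. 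In the residual case ($r$ matched in $M$ and preferring $M(r)$ to $h_t$), lift $\rho$ to $\rho_s$ in $G_2$: the endpoint $r_0$ lifts to $r_0^1$ unmatched in $M_s$ and $h_t$ remains undersubscribed in $M_s$, so $\rho_s$ is augmenting for $M_s$. An initial-edge argument symmetric to the terminal one, together with the level-tracking used in part (i), then forces a $(1,1)$ $G_2$-label somewhere along $\rho_s$ (again exploiting that $h$'s preferences in $G_2$ favor higher levels within a class), contradicting Lemma~\ref{lem:blocking-pair-ms}(1).

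The hard part will be the careful case analysis mediating between $G$-labels and $G_2$-labels under the level lift. The $G_2$ preferences interleave the level ordering with the original preferences within each level, while the laminar classification $T(h)$ is inherited verbatim; this interplay makes the ``hospital-vote-flipped'' sub-case of part (i) highly rigid, and Lemma~\ref{lem:blocking-pair-ms}(2) --- which excludes $(1,1)$ labels at lower-level copies of residents already appearing in $\rho_s$ --- is what ultimately allows one to locate the blocking edge in $G_2$ needed to finish each residual case.
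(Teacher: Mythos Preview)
Your popularity argument (part one) follows the paper's approach: lift $\rho$ to $\rho_2 = map^{-1}(\rho, M_2)$, observe that the resident coordinate of a $(1,1)$ label is preserved under the lift, and handle the one residual level configuration $(j_a,j_b)=(0,1)$ by locating a level transition elsewhere on $\rho_2$ and invoking Lemma~\ref{lem:blocking-pair-ms}. That plan is correct (and is exactly Lemma~\ref{lem:pop-in-g}); the only thing missing is the explicit observation that the transition edge lies in $\tilde{\mathcal{Y}}_{M\oplus M'}$ and hence cannot carry the $G$-label $(-1,-1)$, which is what makes the blocking-pair analysis at the transition go through.

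The maximum-cardinality argument, however, has a genuine gap. Your plan there never uses the hypothesis that $M^\star$ is popular: you either exhibit a forbidden pattern in $\tilde{\mathcal{Y}}_{M\oplus M^\star}$ (contradicting part one) or a blocking pair for $M_2$ (contradicting stability). If this worked, it would prove that $M$ is a maximum-cardinality \emph{feasible} matching, which is false in general. Concretely, your ``level-tracking'' step fails on the augmenting path $\rho\in\mathcal{Y}_{M\oplus M^\star}$ precisely when the level drop from $1$ to $0$ in $\rho_2$ occurs at an edge $(r_{x-1},h_x)$ whose $G$-label is $(-1,-1)$: in that case $(r_{x-1}^1,h_x)$ receives $G_2$-label $(-1,+1)$ and $(r_{x-1}^0,h_x)$ receives $(+1,-1)$, so neither Lemma~\ref{lem:blocking-pair-ms}(1) nor (2) applies. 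The paper's proof (Lemma~\ref{lem:largest-pop}) avoids this by a different route: it first shows (Lemma~\ref{lem:no-aug-path}) that $\tilde{\mathcal{Y}}_{M\oplus M^\star}$ contains no augmenting path, so any augmenting $\rho\in\mathcal{Y}_{M\oplus M^\star}$ must contain a $(-1,-1)$ edge; deleting these edges yields pieces $P_1,\dots,P_t\in\tilde{\mathcal{Y}}_{M\oplus M^\star}$, and the bounds from Lemma~\ref{lem:pop-in-g} on each piece give $\Delta(M^\star,M)_\rho<0$, contradicting the popularity of $M^\star$. You need this vote-counting step (or something equivalent that actually consumes the popularity of $M^\star$) to close the argument.
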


We break down the proof of Theorem~\ref{thm:pop1} in two parts.
Lemma~\ref{lem:pop-in-g} shows that the assignment $M$ satisfies all the
conditions of Theorem~\ref{thm:charc}.
Lemma~\ref{lem:largest-pop} shows that the matching
output is indeed the largest size popular matching in the instance.
Let $M'$ be any assignment in $G$.
Recall the definition of $\tilde{\mathcal{Y}}_{M \oplus M'}$ -- this set contains $M$ alternating
paths and $M$ alternating cycles in $G$ and the edge labels on the $M'$ edges belong to $\{(-1, 1), (1, -1), (1, 1)\}$.

\begin{lemma}
\label{lem:pop-in-g}
Let $M = map(M_2)$ where $M_2$ is a stable matching in $G_2$ and let $M'$ be any feasible assignment in $G$. Consider
the set of alternating paths and alternating cycles $\tilde{\mathcal{Y}}_{M \oplus M'}$.
Then, the following hold:
\begin{enumerate}
\item An alternating cycle $C$ in $\tilde{\mathcal{Y}}_{M \oplus M'}$,
does not contain any edge labeled $(1, 1)$.
\item An alternating path $P$ in $\tilde{\mathcal{Y}}_{M \oplus M'}$
that starts or ends with an edge in $M'$, does not contain any edge labeled $(1, 1)$.
\item An alternating path $P$ in $\tilde{\mathcal{Y}}_{M \oplus M'}$ which
starts and ends with an edge in $M$, contains at most one edge labeled $(1, 1)$.

\end{enumerate}
\end{lemma}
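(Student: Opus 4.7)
My plan is to prove all three parts by contradiction through a unified analysis in $G_2$. Fix any $\tilde\rho \in \tilde{\mathcal{Y}}_{M \oplus M'}$; it arises from some $\rho \in \mathcal{Y}_{M \oplus M'}$ by deleting $(-1,-1)$ edges, and I lift it to $\tilde\rho_2 = map^{-1}(\tilde\rho, M_2)$ in $G_2$ exactly as in the definition of $\rho_s$ used in Corollary~\ref{lem:rho_xor_feasible_gs}. For each resident $r$ appearing in $\tilde\rho$, let $j_r \in \{0,1\}$ denote the level of $map^{-1}(r, M_2)$; invariants $(\mathcal{I}_1)$--$(\mathcal{I}_3)$ make $j_r = 1$ when $r$ is unmatched in $M$, and otherwise make $j_r$ the level at which $r$ is matched to a non-dummy hospital in $M_2$.

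The core step is a per-edge claim: for every $M'$ edge $(r,h) \in \tilde\rho$ paired in the decomposition with the $M$ edge $(r',h) \in \tilde\rho$ (so ${\bf corr}(r)=r'$ at $h$), one always has $j_r \le j_{r'}$. Suppose for contradiction $j_r = 1$ and $j_{r'} = 0$. Because every level-$1$ clone is ranked above every level-$0$ clone at $h$ in $G_2$, the lifted edge $(r^1,h) \in \tilde\rho_2$ has $G_2$-label second component $+1$, so Lemma~\ref{lem:blocking-pair-ms}(1) forces its first component to be $-1$; hence the $G$-label of $(r,h)$ has first component $-1$. Next consider the sibling edge $(r^0,h) \notin \tilde\rho_2$: invariant $(\mathcal{I}_3)$ matches $r^0$ to $d_r^0$, so its first component is $+1$, while its second component equals the $G$-label's second component of $(r,h)$ because $r^0$ and $r'^0$ are both at level $0$ and $h$'s same-level comparison defaults to its $G$-preferences. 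Lemma~\ref{lem:blocking-pair-ms}(2), applied with $j'_r = 0 < j_r = 1$, then rules out a $(1,1)$ label on $(r^0,h)$, forcing the $G$-label's second component to also be $-1$. Hence $(r,h)$ carries the $G$-label $(-1,-1)$, which is impossible by construction of $\tilde{\mathcal{Y}}_{M \oplus M'}$. A one-line use of Lemma~\ref{lem:blocking-pair-ms}(1) alone additionally shows that any $M'$ edge in $\tilde\rho$ with $G$-label $(1,1)$ must satisfy the strict inequality $j_r < j_{r'}$.

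These two facts reduce each of the three parts to a short combinatorial observation. For Part 1, orient the cycle so that at every hospital the $M$-side resident $r'$ precedes the $M'$-side resident $r$; the per-edge inequality $j_r \le j_{r'}$ then makes the cyclic sequence of levels non-increasing, hence constant, contradicting the strict drop demanded by any $(1,1)$ edge. For Part 3, the same per-edge inequality makes the level sequence along the path monotone non-increasing, while two $(1,1)$ edges at positions $p_1 < p_2$ would simultaneously force a $0$ at the $p_1$-th $M'$ side and a $1$ later at the $M$ side preceding the $p_2$-th $M'$ edge: impossible. For Part 2, the path's $M'$ endpoint pins one extreme level. If the endpoint is a resident $r_0$ unmatched in $M$, then $j_{r_0} = 1$, so the per-edge inequality pins every subsequent level to $1$ and leaves no room for the strict change a $(1,1)$ edge would need. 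If the endpoint is an under-subscribed hospital $h_0$ (so ${\bf corr} = \bot$ for the boundary $M'$ edge $(r_1, h_0)$), I pin $j_{r_1} = 0$ by a direct stability argument: $(r_1^0, h_0)$ would be a blocking pair for $M_2$ whenever $j_{r_1} = 1$, with feasibility of the required augmentation guaranteed by Corollary~\ref{lem:rho_xor_feasible_gs}; the boundary edge itself is ruled out from being $(1,1)$ by an analogous blocking-pair argument against $(r_1^{j_{r_1}}, h_0)$.

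The main obstacle is executing the per-edge claim, in particular verifying the hypothesis of Lemma~\ref{lem:blocking-pair-ms}(2) for the sibling edge $(r^0,h)$: its proof requires that the swap $(M_2(h) \setminus \{r'^{j_{r'}}\}) \cup \{r^0\}$ be feasible in $G_2$, and this works because $r^0$ and $r^1$ occupy exactly the same classes of $h$ in $G_2$, so the single-swap feasibility argument inside the proof of Lemma~\ref{lem:rho_xor_feasible_g} carries over verbatim. A secondary subtlety is reading $M_2(r^0) = d_r^0$ off invariant $(\mathcal{I}_3)$ so that $r^0$ indeed ranks $h$ above its $M_2$-partner. Once these are in place, the three combinatorial reductions run cleanly.
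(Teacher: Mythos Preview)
Your proof is correct and follows essentially the same approach as the paper: lift $\tilde\rho$ to $G_2$ via $map^{-1}$, invoke Lemma~\ref{lem:blocking-pair-ms} (parts (1) and (2)) to constrain the level pattern along $\tilde\rho_2$, and then finish each case by combining the resulting level monotonicity with the boundary pinning at an unmatched resident ($j=1$) or an under-subscribed hospital ($j=0$). Your packaging of the transition analysis into a single reusable per-edge claim ($j_r \le j_{r'}$ at every $M'$ edge, strictly at $(1,1)$ edges) is a tidier organization than the paper's separate case enumerations for each type of $\rho$, but the underlying mechanism is identical.
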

\begin{proof}
We first prove the parts (1) and (2).
Recall that $M = map(M_2)$ where $M_2$ is a stable matching in $G_2$.
Assume that $\rho = \langle u_0, v_1, u_1, \ldots, v_k, u_k \rangle$ where
for each $i=0, \ldots, k$, $v_i = M(u_i)$ (in case $u_i$ is a hospital, $v_i \in M(u_i)$).
In case $\rho$ is a cycle, all subscripts follow mod $k$ arithmetic.
The existence of $\rho$ in $\tilde{\mathcal{Y}}_{M \oplus M'}$ implies that there is
an associated $M_2$ alternating path or an $M_2$ alternating cycle
$\rho_2 = map^{-1}(\rho, M_2)$ in $G_2$.

Now assume for the sake of contradiction that $\rho$ contains an edge
$e = (r_a, h_b) \notin M$ labeled $(1, 1)$ for some $a = 0, \ldots, k$,
and $b = 0, \ldots, k$.
We observe the following about preferences of $r_a$ and $h_b$ in $G$.

\begin{itemize}
\item [($\mathcal{O}_1$)] $r_a$ prefers $h_b$ over $h_a = M(r_a)$.
\item [($\mathcal{O}_2$)] $h_b$ prefers $r_a$ over $r_b \in M(h_b)$, where $r_b = {\bf corr}(r_a)$.
\end{itemize}

Using the presence of an edge labeled $(1, 1)$ in
$\rho$, we will contradict the stability of $M_2$ in $G_2$.
Consider the edge $e' = (r_a^{j_a}, h_b)$ in $G_2$.
Since $h_b = M_2(r_b^{j_b})$, we observe that $e' \notin M_2$.
We consider the four cases that can arise depending on the values of $j_a$ and $j_b$.

\vspace{0.1in}
\begin{enumerate}
\begin{minipage}{0.6\linewidth}
\item $j_a = j_b = 0$
\item $j_a = j_b = 1$
\end{minipage}
\begin{minipage}{0.45\linewidth}
\item $j_a = 1$ and $j_b = 0$
\item $j_a = 0$ and $j_b = 1$
\end{minipage}
\end{enumerate}

Recall observation $(\mathcal{O}_1)$, and the fact that the residents do not change their
preferences in $G_2$ w.r.t. the hospitals originally in $G$.
This implies in all the four cases above, the resident $r_a^{j_a}$
prefers $h_b$ over $h_a = M_2(r_a^{j_a})$.
Using $(\mathcal{O}_2)$ and the fact that a hospital $h$ in $G_2$ prefers level-$1$
residents over level-$0$ residents, we can conclude the following.
For the cases $(1), (2)$ and $(3)$, hospital $h_b$ prefers $r_a^{j_a}$ over $r_b^{j_b}$,
which implies that the pair $(r_a^{j_a}, h_b)$ is labeled $(1, 1)$, and thus
forms a blocking pair w.r.t. $M_2$ (using Lemma~\ref{lem:blocking-pair-ms}(1)).

We now consider the three different cases for $\rho$ depending on whether $\rho$
is a path or a cycle.
When $\rho$ is a path, we break down its proof in two cases --
(i) $\rho$ starts or ends with a resident unmatched in $M$.
(ii) $\rho$ starts or ends with an under-subscribed hospital.
In each of the different possibilities for $\rho$, we show that the stability
of $M_2$ can be contradicted even in case $(4)$, i.e. when $j_a = 0$ and $j_b = 1$.

\begin{itemize}
\item $\rho = \langle r_0, h_1, r_1, \ldots, h_{k-1}, r_{k-1} \rangle$ is an
    alternating path that starts or ends with a resident which is unmatched in $M$.
    Here $\rho_2 = map^{-1}(\rho, M_2) = \langle r_0^{j_0}, h_1, r_1^{j_1}, \ldots, h_{k-1}, r_{k-1}^{j_{k-1}} \rangle$
    and for $t = 0, \ldots, k-1$, $j_t \in \{0, 1\}$.

    Using invariants ($\mathcal{I}_1$), ($\mathcal{I}_2$), and ($\mathcal{I}_3$),
    we conclude that a resident $r$ remains unmatched in $M_2$ when its level-$0$ copy is
    matched to the dummy hospital $d_r$, and the level-$1$ copy is unmatched in $M_2$.
    Therefore, the first resident on the path $\rho_2$ is a level-$1$ resident.
    Furthermore, the second resident on the path $r_1$ has to be a level-$1$ resident.
    Otherwise, as $r_0^1$ is unmatched in $M_2$ and $h_1$ prefers a level-$1$ resident
    over a level-$0$ resident, the edge $(r_0^1, h_1)$ will be labeled $(1, 1)$,
    and thus forms a blocking pair w.r.t. $M_2$ (using Lemma~\ref{lem:blocking-pair-ms}(1)).

    We consider an edge $e \in \rho$ such that $b = a+1$.
    In case $(4)$, we observe that as $j_0 = j_1 = 1, j_a = 0$, and $a < b$,
    there exists an index $x$ in $\rho_2$ such that there is a transition from
    a level-$1$ resident to a level-$0$ resident.
    That is, $(r_x^0, h_x) \in M_2$ and $(r_{x-1}^1, h_x) \notin M_2$ both belong to $\rho_2$.

    We enumerate the possible labels for the edge $e_x =(r_{x-1}, h_x)$ in $G$.
    \begin{itemize}
    \item If $e_x$ is labeled $(1,1)$ or $(1,-1)$, then the edge
        $(r_{x-1}^1, h_x)$ is labeled $(1, 1)$, and thus blocks $M_2$
        (using Lemma~\ref{lem:blocking-pair-ms}(1)).

    \item If $e_x$ is labeled $(-1,1)$, then the edge $(r_{x-1}^0, h_x)$
        is labeled $(1, 1)$, and thus blocks $M_2$
        (using Lemma~\ref{lem:blocking-pair-ms}(2)).
    \end{itemize}

\item $\rho = \langle h_0, r_1, h_1, \ldots, r_{k-1}, h_{k-1} \rangle$ is an alternating
    path that starts or ends with an under-subscribed hospital.
    Here $\rho_2 = map^{-1}(\rho, M_2) = \langle h_0, r_1^{j_1}, h_1, \ldots, r_{k-1}^{j_{k-1}}, h_{k-1} \rangle$
    and for $t = 1, \ldots, k-1$, $j_t \in \{0, 1\}$.

    Observe that if $j_1 = 1$, then $(r_1^0, h_0)$ is labeled $(1, 1)$,
    as $h_0$ is unmatched in $M$, and $r_1^0$ prefers $h_0$ to $d_{r_1}$
    ($d_{r_1} = M_2(r_1^0)$ using invariants ($\mathcal{I}_1$), ($\mathcal{I}_2$),
    and ($\mathcal{I}_3$)), contradicting the stability of $M_s$
    (using Lemma~\ref{lem:blocking-pair-ms}(1)).
    Thus, it must be the case that $j_0 = 0$.
    Note that the edge $(r_1, h_0)$ can not be labeled $(1,1)$ in $G$,
    as $h_0$ being under-subscribed prefers being matched to $r_1$,
    and residents do not change their votes, and thus the edge $(r_1^0, h_0)$
    is labeled $(1, 1)$, contradicting the stability of $M_2$
    (using Lemma~\ref{lem:blocking-pair-ms}(1)).

    We consider an edge $e \in \rho$ such that $a = b+1$.
    In case $(4)$, we observe that as $j_1 = 0 , j_b = 1$, and $a > b$, there exists
    an index $x$ in $\rho_2$ such that there is a transition from a
    level-$0$ resident to a level-$1$ resident.
    That is, $(r_x^0, h_x) \in M_2$ and $(r_{x+1}^1, h_x) \notin M_2$ both belong to $\rho_2$.
    Using an argument similar to in the case above, we can show that
    either the edge $(r_{x+1}^1, h_x)$ or the edge $(r_{x+1}^0, h_x)$
    is labeled $(1, 1)$, and therefore forms a blocking pair w.r.t. $M_2$.

\item $\rho = \langle r_0, h_0, r_1, h_1, \ldots, r_k, h_k, r_0 \rangle$ is an alternating cycle.
    Here $\rho_2 = map^{-1}(\rho, M_s) = \langle r_0^{j_0}, h_0, r_1^{j_1}, h_1, \dots, r_k^{j_k}, h_k, r_0^{j_0} \rangle$
    and for $t = 1, \ldots, k-1$, $j_t \in \{0, 1\}$.

    We consider an edge $e \in \rho$ such that $a = b+1$.
    As $j_a = 0$ and $j_b = 1$, and $b < a$, this is a transition from a
    level-$1$ resident to a level-$0$ resident in the cycle $\rho_2$.
    To complete the cycle $\rho_2$ there must exist an index $x$ such that there
    is a transition from a level-$0$ resident to a level-$1$ resident.
    That is, $(r_x^0, h_x) \in M_2$ and $(r_{x+1}^1, h_x) \notin M_2$ both belong to $\rho_2$.
    Using an argument similar to as in the first case, we can show that
    either the edge $(r_{x+1}^1, h_x)$ or the edge $(r_{x+1}^0, h_x)$
    is labeled $(1, 1)$, and therefore forms a blocking pair w.r.t. $M_2$.
\end{itemize}

We now prove part (3) of the lemma.
Consider $P = \langle r_0, h_0, \ldots, r_{k-1}, h_{k-1} \rangle$ where
for each $i=0, \ldots, k-1$, $M(r_i) = h_i$.
The existence of $P$ in $\tilde{\mathcal{Y}}_{M \oplus M'}$
implies that there exists an $M_2$ alternating path $P_2$ in $G_2$.
Here $P_2 = \langle r_0^{j_0}, h_0, \ldots, r_{k-1}^{j_{k-1}}, h_{k-1} \rangle$,
and for $t = 0, \ldots, k-1$, $j_t \in \{0, 1\}$.

For the sake of contradiction assume that $P$ contains
at least two edges, $e_1 = (r_{x}, h_{x -1})$, $e_2 = (r_{y}, h_{y -1})$
for some $x, y = 1, \ldots, k-1$, w.l.o.g. $x \neq y, x < y$ and $e_1, e_2$ are labeled $(1, 1)$.
We observe the following about preferences of $r_{x}$, $r_{y}$
and $h_{x -1}$, $h_{y -1}$ in $G$.

\begin{itemize}
\item [($\mathcal{O}_1$)] $r_{x}$ prefers $h_{x -1}$ over $h_{x} = M(r_{x})$.\\
$r_{y}$ prefers $h_{y -1}$ over $h_{y} = M(r_{y})$.

\item [($\mathcal{O}_2$)] $h_{x -1}$ prefers $r_{x}$ over $r_{x -1} \in M(h_{x -1})$.\\
$h_{y -1}$ prefers $r_{y}$ over $r_{y -1} \in M(h_{y -1})$.
\end{itemize}

Using the presence of the edges $e_1$ and $e_2$ labeled $(1, 1)$ in
$P$, we will contradict the stability of $M_2$ in $G_2$.
Consider the edges $e'_1 = (r_x^{j_x}, h_{x -1})$ and
$e'_2 = (r_y^{j_y}, h_{y -1})$ in $G_2$,
and since $h_{x-1} = M_2(r_x^{j_{x-1}})$ and $h_{y-1} = M_2(r_y^{j_{y-1}})$,
note that $e'_1, e'_2 \notin M_2$.

We first consider the edge $e'_1$, and consider the four cases that
can arise depending on the values of $j_x$ and $j_{x -1}$.

\vspace{0.1in}
\begin{enumerate}
\begin{minipage}{0.6\linewidth}
\item $j_{x -1} = j_x = 0$
\item $j_{x -1} = j_x = 1$
\end{minipage}
\begin{minipage}{0.45\linewidth}
\item $j_{x -1} = 0$ and $j_x = 1$
\item $j_{x -1} = 1$ and $j_x = 0$
\end{minipage}
\end{enumerate}

Recall observation $(\mathcal{O}_1)$, and the fact that the residents do not change their
preferences in $G_2$ w.r.t the hospitals originally in $G$.
This implies that in all the four cases above, the resident $r_x^{j_x}$
prefers $h_{x -1}$ over $h_x = M_2(r_x^{j_x})$.
Using $(\mathcal{O}_2)$ and the fact that a hospital $h$ in $G_2$ prefers level-$1$
residents over level-$0$ residents, we can conclude the following.
For the cases $(1), (2)$ and $(3)$, hospital $h_{x -1}$ prefers $r_x^{j_x}$ over $r_{x-1}^{j_{x -1}}$,
which implies that the pair $(r_x^{j_x}, h_{x -1})$ is labeled $(1, 1)$,
which contradicts the stability of $M_s$ (using Lemma~\ref{lem:blocking-pair-ms}(1)).

With a similar analysis for the edge $e'_2$, we conclude that the first three cases do not arise.
There is only one case left to consider, when $j_{x -1} = 1, j_x = 0$ and $j_{y -1} = 1, j_y = 0$.
As $x \neq y, x < y$, and $j_x = 0$, $j_{y-1} = 1$, there exists an index $\ell$ in $P_2$
such that there is a transition from a level-0 resident to a level-1 resident.
That is, $(r_\ell^0, h_\ell) \in M_2$ and $(r_{\ell +1}^1, h_\ell) \notin M_2$ both belong to $P_2$.

We enumerate the possible labels for the edge $e_\ell =(r_{\ell +1}, h_\ell)$ in $G$.
\begin{itemize}
\item If $e_\ell$ is labeled $(1,1)$ or $(1,-1)$, then the edge
    $(r_{\ell +1}^1, h_\ell)$ is labeled $(1, 1)$, which contradicts
    the stability of $M_s$ (using Lemma~\ref{lem:blocking-pair-ms}(1)).

\item If $e_\ell$ is labeled $(-1,1)$, then the edge
    $(r_{\ell +1}^0, h_\ell)$ is labeled $(1, 1)$, which contradicts
    the stability of $M_s$ (using Lemma~\ref{lem:blocking-pair-ms}(2)).
\end{itemize}

This completes the proof.
\qed
\end{proof}

\begin{lemma}
\label{lem:no-aug-path}
There is no augmenting path with respect to $M$ in $\tilde{\mathcal{Y}}_{M \oplus M'}$.
\end{lemma}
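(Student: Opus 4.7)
The plan is to argue by contradiction: assume that $\tilde{\mathcal{Y}}_{M \oplus M'}$ contains an augmenting path $P$ with respect to $M$, and derive a blocking pair for the stable matching $M_2$ in $G_2$. Being augmenting, $P$ has one more $M'$-edge than $M$-edge, and each of its two endpoints is either a resident unmatched in $M$ or a hospital under-subscribed in $M$. By Lemma~\ref{lem:pop-in-g}(2), every $M'$-edge on $P$ is labeled $(1,-1)$ or $(-1,1)$ in $G$, since $(1,1)$ is excluded by that lemma and the $(-1,-1)$ edges were already removed when forming $\tilde{\mathcal{Y}}_{M \oplus M'}$.

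The first step is to lift $P$ to $P_2 = map^{-1}(P, M_2)$ in $G_2$. Invariants $(\mathcal{I}_1)$--$(\mathcal{I}_3)$ imply that an unmatched resident $r$ in $M$ corresponds to $r^1$ being unmatched in $M_2$ (with $r^0$ absorbed by $d_r^0$), while an under-subscribed hospital in $M$ remains under-subscribed in $M_2$ because both the total capacity $q(h)$ and the class quotas $q(\bar{C}_i^h)=q(C_i^h)$ carry over unchanged from $G$ to $G_2$. Corollary~\ref{lem:rho_xor_feasible_gs} further guarantees that $M_2 \oplus P_2$ is feasible in $G_2$, which certifies the feasibility side of the blocking pairs we will construct.

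The heart of the argument is a level-propagation analysis along $P_2$, in the same spirit as the endpoint cases analysed in the proof of Lemma~\ref{lem:pop-in-g}. The crucial feature of $G_2$ is that every hospital ranks every level-$1$ resident above every level-$0$ resident, so any mismatch of levels at a common hospital on $P_2$ gives rise to an edge labeled $(1,1)$ in $G_2$. Starting from the unmatched level-$1$ endpoint $r_0^1$, the absence of a $(1,1)$ label in $G_2$ forces the next resident on $P_2$ to be level-$1$ as well: if the $G$-label of the traversed edge is $(1,-1)$ this follows from Lemma~\ref{lem:blocking-pair-ms}(1) applied to the edge of $P_2$, while if the $G$-label is $(-1,1)$ it follows from Lemma~\ref{lem:blocking-pair-ms}(2) applied to the off-path companion edge $(r^0,h)$. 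Iterating, every resident along $P_2$ is forced to level-$1$, and at the terminal under-subscribed hospital $h_k$ the level-$0$ copy $r_k^0$ of the last resident is matched only to $d_{r_k}^0$ and strictly prefers $h_k$. Since ${\bf corr}(r_k) = \bot$ in $G$, the slack $|M(h_k) \cap C| < |M'(h_k) \cap C| \leq q(C)$ holds for every class $C$ of $h_k$ containing $r_k$ (by the same laminar argument used in the proof of Lemma~\ref{lem:rho_xor_feasible_g}), so $(r_k^0, h_k)$ is feasibly addable to $M_2(h_k)$ and therefore blocks $M_2$, contradicting its stability. The symmetric case of a path beginning at an under-subscribed hospital is handled by the dual argument with the roles of level $0$ and level $1$ interchanged.

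The main obstacle will be the $(-1,1)$ step of the propagation, where the blocking witness lives off $P_2$ and one must appeal to Lemma~\ref{lem:blocking-pair-ms}(2); the feasibility of this witness is obtained from the same-class observation that powers the proof of that lemma. The feasibility of the terminal $(r_k^0, h_k)$ blocking pair requires the class-by-class slack supplied by $r_k$ being unpaired under ${\bf corr}$, which is a short but somewhat delicate laminar-tree induction.
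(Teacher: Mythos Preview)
Your argument is correct and rests on the same mechanism as the paper's: lift $P$ to $P_2 = map^{-1}(P,M_2)$, analyse the levels of the residents along $P_2$, and produce a blocking pair for $M_2$ via Lemma~\ref{lem:blocking-pair-ms}. The paper merely organises the level analysis the other way round: it first pins down $j_0=j_1=1$ at the resident end and $j_{k-1}=0$ at the hospital end (the latter using exactly your terminal blocking pair $(r_{k-1}^0,h_k)$), and then locates a single transition index $x$ with $j_{x-1}=1$, $j_x=0$ at which it derives the contradiction. Your forward propagation ``$j_{i-1}=1\Rightarrow j_i=1$'' is precisely the contrapositive of the paper's ``a transition yields a blocking pair'', so the two framings are equivalent.

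Two small remarks. First, your appeal to Lemma~\ref{lem:pop-in-g}(2) to rule out $(1,1)$ labels on $P$ is legitimate but not needed: the propagation step goes through for a $(1,1)$-labeled edge as well (it falls under your $(1,-1)$ case, since the resident's vote is $+1$), and indeed the paper's proof does not invoke Lemma~\ref{lem:pop-in-g} here. Second, your closing sentence about a ``symmetric case of a path beginning at an under-subscribed hospital'' is superfluous: an augmenting path in a bipartite instance is odd-length, so it has exactly one unmatched-resident end and one under-subscribed-hospital end, and the single argument you gave already covers it. Your class-slack justification for the feasibility of $M_2(h_k)\cup\{r_{k-1}^0\}$ is correct (the key point being that $r_{k-1}\in Y'\cap C$ throughout the {\bf corr} algorithm forces every removal from $Y\cap C$ to be paired inside $C$), and is more detailed than what the paper spells out at the analogous step.
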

\begin{proof}
Let $P = \langle r_0, h_1, r_1, h_2, \ldots, h_{k-1}, r_{k-1}, h_k \rangle$ be an augmenting
path where for each $i=1, \ldots, k-1$, $M(r_i) = h_i$.
The existence of $P$ in $\tilde{\mathcal{Y}}_{M \oplus M'}$
implies that there exists an $M_2$ augmenting path
$P_2 = \langle r_0^{j_0}, h_1, r_1^{j_1}, h_2, \ldots, h_{k-1}, r_{k-1}^{j_{k-1}}, h_k \rangle$
in $G_2$, and for $t = 0, \ldots, k-1$, $j_t \in \{0, 1\}$.

Using invariants ($\mathcal{I}_1$), ($\mathcal{I}_2$), and ($\mathcal{I}_3$),
we conclude that a resident $r$ remains unmatched in $M_2$ when its level-$0$ copy is
matched to the dummy vertex $d_r$, and the level-$1$ copy is unmatched in $M_2$.
Therefore the first resident on the path $P_2$ is a level-$1$ resident.
The second resident on the path $r_1$ has to be a level-$1$ resident,
otherwise the edge $(r_0^1, h_1)$ will be labeled $(1, 1)$, and thus
contradict the stability of $M_2$ (using Lemma~\ref{lem:blocking-pair-ms}(1)).
This is because $r_0^1$ prefers being matched to $h_1$ than being unmatched in $M_2$,
and $h_1$ prefers level-$1$ resident over a level-$0$ resident.
Observe that $j_{k-1} = 0$, else the pair $(r_{k-1}^0, h_k)$ is labeled $(1, 1)$,
as $r_{k-1}^0$ is matched to $d_r$ (by invariants ($\mathcal{I}_1$) and ($\mathcal{I}_2$)),
which is at the end of its preference list, and $h_k$ is unmatched in $M'$.

Therefore the path $P_2$ is of the form
$\langle r_0^1, h_1, r_1^1, h_2, \ldots, h_{k-1}, r_{k-1}^0, h_k \rangle$.
As $j_0 = j_1 = 1$ and $j_{k-1} = 0$,
there exists an index $x$ in $P_2$ such that there is a transition from a
level-1 resident to a level-0 resident.
That is, $(r_x^0, h_x) \in M_2$ and $(r_{x-1}^1, h_x) \notin M_2$ both belong to $P_2$.

We enumerate the possible labels for the edge $e_x =(r_{x-1}, h_x)$ in $G$.
\begin{itemize}
\item If $e_x$ is labeled $(1,1)$ or $(1,-1)$, then the edge
    $(r_{x-1}^1, h_x)$ is labeled $(1, 1)$, which contradicts
    the stability of $M_s$ (using Lemma~\ref{lem:blocking-pair-ms}(1)).

\item If $e_x$ is labeled $(-1,1)$, then the edge $(r_{x-1}^0, h_x)$
    is labeled $(1, 1)$, which contradicts
    the stability of $M_s$ (using Lemma~\ref{lem:blocking-pair-ms}(2)).
\end{itemize}

This contradicts our assumption that $P$ is augmenting with
respect to $M$ in $\tilde{\mathcal{Y}}_{M \oplus M'}$.
\qed
\end{proof}

\begin{lemma}
\label{lem:largest-pop}
There exists no popular matching $M^*$ in $G$ such that $|M^*| > |M|$.
\end{lemma}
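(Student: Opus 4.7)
The plan is to derive a contradiction by a counting argument on the decomposition $\tilde{\mathcal{Y}}_{M \oplus M^*}$, combining the popularity of $M$ (Lemma~\ref{lem:pop-in-g}) with the assumed popularity of $M^*$.  Suppose for contradiction that a popular $M^*$ satisfies $|M^*| > |M|$.  Since $|M^*|-|M|$ equals the number of residents matched only in $M^*$ minus the number matched only in $M$, the hypothesis supplies a resident $r_0$ that is unmatched in $M$ and matched in $M^*$.  The only edge at $r_0$ in $M \oplus M^*$ is $(r_0,M^*(r_0))$, whose $r_0$-vote is $+1$ since $M(r_0)=\bot$; hence the label is not $(-1,-1)$, this edge survives into $\tilde{\mathcal{Y}}_{M \oplus M^*}$, and $r_0$ remains the endpoint of some path $\rho$ there, an endpoint \emph{not} in $\tilde{\mathcal{U}}$.

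Next I would classify each component of $\tilde{\mathcal{Y}}_{M \oplus M^*}$ as a cycle, an $M$-augmenting path (no endpoint in $\tilde{\mathcal{U}}$), a \emph{mixed} path (exactly one endpoint in $\tilde{\mathcal{U}}$), or a \emph{both-$M$} path (both endpoints in $\tilde{\mathcal{U}}$), and let $a$, $m$, $b$ count these last three types.  Lemma~\ref{lem:no-aug-path} gives $a=0$, while $|\tilde{\mathcal{U}}|=2b+m$ by an endpoint count.  Since $M$ is popular, Theorem~\ref{thm:charc} forbids a $(+1,+1)$-labeled edge on any cycle and on any path having an $M$-unmatched or $M$-under-subscribed endpoint (so on every cycle, augmenting, or mixed path), and permits at most one such edge on each both-$M$ path.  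Hence the total number of $(+1,+1)$-labeled edges in $\tilde{\mathcal{Y}}_{M \oplus M^*}$ is at most $b$.

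I then rewrite $\Delta(M^*, M)$ via $\tilde{\mathcal{Y}}_{M \oplus M^*}$: each surviving $M^*$-edge contributes $+2$ if labeled $(+1,+1)$ and $0$ otherwise, while each endpoint in $\tilde{\mathcal{U}}$ contributes $-1$, giving
\[
\Delta(M^*, M) \;=\; 2\,\#\{(+1,+1)\text{ edges}\}\;-\;|\tilde{\mathcal{U}}| \;\le\; 2b-(2b+m) \;=\; -m.
\]
Popularity of $M$ gives $\Delta(M^*, M)\le 0$, and popularity of $M^*$ gives $\Delta(M, M^*)\le 0$, i.e.\ $\Delta(M^*, M)\ge 0$ by the antisymmetry of $\Delta$ (which follows from the symmetry of the \textbf{corr} involution and of $vote_u$).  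Combined, $m=0$.  But then the path $\rho$ on which $r_0$ sits as a non-$\tilde{\mathcal{U}}$ endpoint must be either augmenting or mixed, contradicting $a=m=0$.

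The hard part will be establishing the bound $\Delta(M^*, M) \le -m$, as it requires simultaneously invoking all three parts of the characterisation in Theorem~\ref{thm:charc} and the no-augmenting-path conclusion of Lemma~\ref{lem:no-aug-path}, and also a careful endpoint classification (in particular, checking that "created" endpoints produced by the $(-1,-1)$-deletion all land in $\tilde{\mathcal{U}}$, so that mixed paths can only pair a real augmenting endpoint with an $M$-only vertex); once the bound is in hand, the contradiction from the existence of $r_0$ is immediate.
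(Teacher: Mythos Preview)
Your argument is correct and follows essentially the same route as the paper: both rely on Lemma~\ref{lem:pop-in-g} to bound the $(1,1)$-labeled edges on each type of component and on Lemma~\ref{lem:no-aug-path} to rule out $M$-augmenting paths in $\tilde{\mathcal{Y}}_{M\oplus M^*}$, and then derive $\Delta(M^*,M)<0$, contradicting the popularity of $M^*$. The only difference is organizational: the paper fixes one augmenting path $P\in\mathcal{Y}_{M\oplus M^*}$, splits it at its $(-1,-1)$ edges, and shows the end pieces contribute $-1$ each while the interior pieces contribute $\le 0$, whereas you perform the same accounting globally over all of $\tilde{\mathcal{Y}}_{M\oplus M^*}$ to obtain $\Delta(M^*,M)\le -m$ with $m\ge 1$; the global version is arguably cleaner since it automatically handles the remaining components that the paper leaves implicit.
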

\begin{proof}
For contradiction, assume that such an assignment $M^*$ exists in $G$.
Consider the set ${\mathcal{Y}}_{M \oplus M^*}$; recall
that this set contains alternating paths and cycles possibly containing
edges
labeled $(-1, -1)$. Since $|M^*| > |M|$ there must exist
an augmenting path $P$ in ${\mathcal{Y}}_{M \oplus M^*}$. We first claim
that the path $P$ must contain at least one edge labeled $(-1, -1)$. If not,
then the path $P$ is also contained in $\tilde{\mathcal{Y}}_{M \oplus M^*}$.
However, by Lemma~\ref{lem:no-aug-path} there is no augmenting path with respect
to $M$ in $\tilde{\mathcal{Y}}_{M \oplus M'}$ for
any feasible matching $M'$ in $G$.

We now remove all edges from $P$ which are labeled $(-1, -1)$. This
breaks the path into sub-paths say $P_1, P_2, \ldots, P_t$ for some $t\ge 1$,
where $P_1$ and $P_t$ have one endpoint unmatched in $M$.
Consider the path $P_1$; since $P_1$ does not contain any $(-1,-1)$
edge this implies that $P_1 \in \tilde{\mathcal{Y}}_{M \oplus M^*}$.
Without loss of generality, assume that $P_1$ starts with a resident
$r$ which is unmatched in $M$. Thus using Lemma~\ref{lem:pop-in-g}(2),
$P_1$ does not contain any edge labeled $(1, 1)$.
Let us denote by $\Delta(M^*, M)_{P_1}$ the difference between votes
of $M^*$ and $M$ restricted to vertices of path $P_1$.
It is clear that $\Delta(M^*, M)_{P_1} < 0$.
Also, for each $i = 2, \ldots, t-1$, the alternating paths $P_i$ have both
of their endpoints matched in $M$. Thus we have $\Delta(M^*, M)_{P_i} \le 0$ as there
can be at most one $(1,1)$ edge (by Lemma~\ref{lem:pop-in-g}(3)) in these paths,
but the endpoints prefer $M$, as they are matched in $M$ but not in $M'$.
If $P_t$ exists, then a argument similar as given for $P_1$, we have $\Delta(M^*, M)_{P_t} < 0$.
Using these observations, we conclude that $M$ is more popular than $M^*$,
a contradiction to the assumption that $M^*$ and $M$ are both popular.

Thus, for any given matching $M^*$ such that $|M^*| > |M|$, we know
that $M$ is more popular than such a matching.
This completes the proof of the lemma, and shows that the matching $M = map(M_2)$
is a maximum cardinality popular matching in $G$.
\qed
\end{proof}

\subsection{Popular matching amongst maximum cardinality matchings}
\label{sec:sec4}
In this section we give an efficient algorithm for computing a matching which is
{\it popular} amongst the set of maximum cardinality matchings.
The matching $M$ that we output cannot be beaten in terms of
votes by any feasible maximum cardinality matching.
Our algorithm uses the generic reduction with a value of $s = |\RR|$ = $n_1$ (say).
Thus, $|\RR_{n_1}| = n_1^2$, and $|\HH_{n_1}| = |\HH| + O(n_1^2)$.
Furthermore,  $|E_{n_1}| = O(mn_1)$ where
$m = |E|$. Thus the running time of the generic algorithm presented earlier with $s = n_1$
for an LCSM$^+$ instance is $O(mn\cdot n_1) = O(mn^2)$
and for a PCSM$^+$ instance is $O(mn_1) = O(mn)$.

To prove correctness, we show that the 
matching output by our algorithm is
(i) maximum cardinality and
(ii) popular amongst all maximum cardinality feasible matchings.
Let $M = map(M_{n_1})$ and $M^*$ be any maximum cardinality feasible matching in $G$.
Consider the set $\mathcal{Y}_{M \oplus M^*}$, and let $\rho$ be an
alternating path or an alternating cycle in $\mathcal{Y}_{M \oplus M^*}$.
Let $\rho_{n_1} = map^{-1}(\rho, M_{n_1})$ denote the associated alternating path or cycle in $G_{n_1}$.
We observe that every hospital on the path $\rho_{n_1}$ is a non-dummy hospital
since $\rho_{n_1}$ was obtained using the inverse-map of $\rho$.
We observe  two useful properties about such a path or cycle $\rho_{n_1}$ in $G_{n_1}$.
We show that if for a hospital $h \in \rho_{n_1}$, the level of the unmatched resident incident on 
$h$ is greater than the level of the matched resident incident on $h$, then
such a level change is {\it gradual}, and the associated edge in $\rho$ has
the label $(-1, -1)$. 
Lemma~\ref{lem:no-jump-transition}, gives a proof of these.

\begin{lemma}
\label{lem:no-jump-transition}
Let $\rho_{n_1}$ be an alternating path or an alternating cycle in $G_{n_1}$
and let $h$ be a  hospital which has degree two in $\rho_{n_1}$.
Let $\langle r_a^{j_a}, h, r_b^{j_b} \rangle$ be the sub-path containing $h$ where
$M(r_b^{j_b}) = h$. If $j_a > j_b$ , we claim the following:
\begin{enumerate}
\item  $j_a = j_b + 1$.
\item  The associated edge $(r_a, h) \in \rho$ is labeled $(-1, -1)$.
\end{enumerate}
\end{lemma}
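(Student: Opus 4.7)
The plan is to prove both claims by contradiction, in each case exhibiting a candidate edge $(r_a^{\ell}, h)$ for a specific level $\ell \ne j_a$ and arguing this edge is labeled $(1,1)$ in $G_{n_1}$, so that Lemma~\ref{lem:blocking-pair-ms} is violated. Throughout I will rely on three structural facts: (a) every non-dummy hospital in $G_{n_1}$ strictly prefers a higher-level copy to a lower-level copy, with ties within a level broken by the original $G$-preferences; (b) every class $\bar{C}_i^h$ of $G_{n_1}$ contains all level copies $r_a^0,\ldots,r_a^{n_1-1}$ simultaneously, so by Corollary~\ref{lem:rho_xor_feasible_gs} the set $(M_{n_1}(h)\setminus\{r_b^{j_b}\})\cup\{r_a^{\ell}\}$ is feasible for $h$ for every $\ell$; and (c) invariants $(\mathcal{I}_1)$--$(\mathcal{I}_3)$ imply that for every $\ell\ne j_a$ the copy $r_a^{\ell}$ is matched in $M_{n_1}$ to some dummy $d_{r_a}^{\cdot}$ sitting at the bottom of its preference list, so $r_a^{\ell}$ prefers $h$ to its $M_{n_1}$-partner.

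For Part~(1), I would assume $j_a \ge j_b + 2$ and test the candidate edge $(r_a^{j_b+1}, h)$. This edge is not in $\rho_s$, since $map^{-1}(r_a, M_{n_1}) = r_a^{j_a} \ne r_a^{j_b+1}$, so Lemma~\ref{lem:blocking-pair-ms}(2) is the applicable clause. Fact (c) gives first coordinate $+1$; fact (a) together with $j_b+1 > j_b$ gives second coordinate $+1$; fact (b) provides the feasibility needed to form a blocking pair. This $(1,1)$-labeled edge contradicts Lemma~\ref{lem:blocking-pair-ms}(2), forcing $j_a \le j_b + 1$, and combined with the hypothesis $j_a > j_b$, yields $j_a = j_b + 1$.

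For Part~(2), I first rule out $r_a$ being unmatched in $M$, since otherwise $vote_{r_a}(h, M(r_a)) = +1$ and the label could not be $(-1,-1)$. If $r_a$ were unmatched then $j_a = n_1 - 1$ and $r_a^{n_1-1}$ itself is unmatched in $M_{n_1}$; the edge $(r_a^{n_1-1}, h) \in \rho_s$ would then have both votes $+1$ (preferring $h$ to $\bot$, and dominating $r_b^{j_b}$ by level), violating Lemma~\ref{lem:blocking-pair-ms}(1). Hence $r_a$ is matched in $M$ and $M_{n_1}(r_a^{j_a}) = M(r_a)$ is a non-dummy hospital. For the first coordinate of the $G$-label of $(r_a,h)$: if $r_a$ preferred $h$ to $M(r_a)$, then $r_a^{j_a}$ would prefer $h$ to $M_{n_1}(r_a^{j_a})$, and by (a) together with $j_a = j_b+1 > j_b$, $h$ would prefer $r_a^{j_a}$ to $r_b^{j_b}$, making $(r_a^{j_a},h) \in \rho_s$ a $(1,1)$ edge and violating Lemma~\ref{lem:blocking-pair-ms}(1); hence $vote_{r_a}(h, M(r_a)) = -1$. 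For the second coordinate, suppose $h$ preferred $r_a$ to $r_b$ in $G$; I would then inspect the edge $(r_a^{j_b}, h) \notin \rho_s$. Fact (c) gives its first vote $+1$, and because $r_a^{j_b}$ and $r_b^{j_b}$ lie at the same level, fact (a) transports the $G$-preference to $G_{n_1}$ and gives second vote $+1$; feasibility is fact (b). The resulting $(1,1)$ edge contradicts Lemma~\ref{lem:blocking-pair-ms}(2), so $vote_h(r_a, r_b) = -1$, completing the $(-1,-1)$ labeling.

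The main obstacle I anticipate is choosing the correct witness level $\ell$ for each sub-claim and then checking whether $(r_a^{\ell}, h)$ does or does not lie in $\rho_s$, because that choice dictates whether clause (1) or clause (2) of Lemma~\ref{lem:blocking-pair-ms} is the appropriate tool; once this case split is laid out cleanly, the preference and feasibility verifications proceed uniformly via facts (a)--(c).
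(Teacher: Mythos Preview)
Your proposal is correct and follows essentially the same strategy as the paper: exhibit a level copy $r_a^{\ell}$ whose edge to $h$ is labeled $(1,1)$ in $G_{n_1}$, then invoke Lemma~\ref{lem:blocking-pair-ms}. One small overstatement: fact~(c) as written is false for $\ell>j_a$, since by invariant~$(\mathcal{I}_3)$ such a copy is matched to $d_{r_a}^{\ell-1}$, which sits at the \emph{top} of its list; fortunately you only ever use $\ell<j_a$, so the argument stands. A minor stylistic difference worth noting: in Part~(1) the paper splits on the $G$-label of $(r_a,h)$ and uses witness level $j_a$ (via clause~(1)) or $j_a-1$ (via clause~(2)) accordingly, whereas your uniform choice $\ell=j_b+1$ avoids that case split entirely and goes straight through clause~(2); in Part~(2) your two separate contradictions on the coordinates of the label are just a repackaging of the paper's enumeration of the three non-$(-1,-1)$ labels.
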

\begin{proof}
We first prove  that $j_a = j_b + 1$.
For contradiction, assume that $j_a > j_b + 1$.
Observe that $h$ prefers all the level-$j_a$ residents over
any level-$j_b$ resident.
We consider the edge $e = (r_a, h)$ in the graph $G$.
We claim that the label for the edge $e$ cannot be $(1, 1)$ or $(1, -1)$, otherwise
the edge $(r_a^{j_a}, h)$ is labeled $(1, 1)$ in $G_{n_1}$
as the residents do not change their votes.
Similarly, we claim that the label for the edge $e$ cannot be $(-1, 1)$ or $(-1, -1)$,
as $r_a^{j_a-1}$ is matched in $M_{n_1}$ to the last dummy on its
preference list, $d_{r_a}^{j_a-1} = M_{n_1}(r_a^{j_a-1})$
(by invariant $(\mathcal{I}_3)$), and prefers $h$ to $d_{r_a}^{j_a-1}$, and $h$
prefers all the level-$(j_a-1)$ residents over any level-$j_b$ resident.
In this case the edge $(r_a^{j_a-1}, h)$ is labeled $(1, 1)$ in $G_{n_1}$,
and thus blocks $M_{n_1}$ (by Lemma~\ref{lem:blocking-pair-ms}(2)).

To prove part (b), we assume $j_a = j_b+1$.
We enumerate the possible labels for the edge $e = (r_a, h)$ in $G$.
\begin{itemize}
\item If $e$ is labeled $(1,1)$ or $(1,-1)$, then the edge
    $(r_a^{j_a}, h)$ is labeled $(1, 1)$, as $r_a^{j_a}$
    prefers $h$ over $M_{n_1}(r_a^{j_a})$, and $h$
    prefers any level-$j_a$ resident over a level-$j_b$ resident.
    Thus, the edge $(r_a^{j_a}, h)$ blocks $M_{n_1}$
    (by Lemma~\ref{lem:blocking-pair-ms}(1)).

\item If $e$ is labeled $(-1,1)$, then the edge $(r_a^{j_b}, h)$
    is labeled $(1, 1)$, as $r_a^{j_b}$ prefers $h$ over
    $d_{r_a}^{j_b} = M_{n_1}(r_a^{j_b})$, and $h$ prefers $r_a^{j_b}$
    over $r_b^{j_b}$ according to its preference list.
    Thus, the edge $(r_a^{j_b}, h)$ blocks $M_{n_1}$
    (by Lemma~\ref{lem:blocking-pair-ms}(2)).
\end{itemize}
Thus, the only possible label for the edge $(r_{a}, h)$ is $(-1,-1)$.
\qed
\end{proof}

We use Lemma~\ref{lem:no-aug-path-maxc} to prove that $M$ is a maximum cardinality matching in $G$.
\begin{lemma}
\label{lem:no-aug-path-maxc}
Let $M^{*}$ be any feasible maximum cardinality matching in $G$.
Then there is no augmenting path with respect to $M$ in $\mathcal{Y}_{M \oplus M^*}$.
\end{lemma}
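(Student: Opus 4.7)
The plan is to proceed by contradiction: assume some augmenting path $P \in \mathcal{Y}_{M \oplus M^*}$ with respect to $M$ exists, and derive a contradiction by a level-counting argument along $P_{n_1} = map^{-1}(P, M_{n_1})$ in $G_{n_1}$. A parity check on the $M/M^*$-alternation (which must have one more $M^*$-edge than $M$-edge on the path) forces $P$ to join an unmatched resident $r_0$ to an under-subscribed hospital $h_k$, so I would write $P = \langle r_0, h_1, r_1, \ldots, r_{k-1}, h_k \rangle$ with $M(r_i) = h_i$ for $1 \le i \le k-1$, and $P_{n_1} = \langle r_0^{j_0}, h_1, r_1^{j_1}, \ldots, r_{k-1}^{j_{k-1}}, h_k \rangle$.

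Next I would pin down the extreme levels. Invariant $(\mathcal{I}_3)$ applied to the unmatched $r_0$ gives $j_0 = n_1 - 1$. For the other end, $P$ terminating at $h_k$ in the decomposition forces ${\bf corr}(r_{k-1}) = \bot$; mimicking the class-counting in the proof of Lemma~\ref{lem:rho_xor_feasible_g}, one sees that for every class $C$ of $T(h_k)$ containing $r_{k-1}$, $|M(h_k) \cap C| < |M^*(h_k) \cap C| \le q(C)$. Together with $h_k$ being under-subscribed in $M_{n_1}$ (by $(\mathcal{I}_3)$, since $|M_{n_1}(h_k)| = |M(h_k)| < q(h_k)$), this makes $M_{n_1}(h_k) \cup \{r_{k-1}^0\}$ feasible in $G_{n_1}$. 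If $j_{k-1} > 0$ then $(\mathcal{I}_3)$ would put $r_{k-1}^0$ on its dummy $d_{r_{k-1}}^0$ at the tail of its preference list, so $(r_{k-1}^0, h_k)$ would block the stable matching $M_{n_1}$; hence $j_{k-1} = 0$.

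The remainder is a short level-budget argument. At $h_1$, the edge $(r_0, h_1) \in M^* \setminus M$ carries $vote_{r_0}(h_1, M(r_0)) = +1$ since $r_0$ is unmatched, so its label is not $(-1,-1)$; Lemma~\ref{lem:no-jump-transition} then rules out $j_0 > j_1$, forcing $j_1 = n_1 - 1$. At each middle hospital $h_i$ for $2 \le i \le k-1$, Lemma~\ref{lem:no-jump-transition} gives $j_i \ge j_{i-1} - 1$, i.e., the level can fall by at most one per transition. Walking from $j_1 = n_1 - 1$ down to $j_{k-1} = 0$ over the $k-2$ transitions $h_2, \ldots, h_{k-1}$ therefore demands $k - 2 \ge n_1 - 1$, i.e., $k \ge n_1 + 1$. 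But $r_0, \ldots, r_{k-1}$ lie on a single component of $\tilde{G} = (\RR \cup \HH, M \oplus M^*)$, in which every resident has degree at most two, so they are pairwise distinct and $k \le |\RR| = n_1$, the desired contradiction.

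The delicate step will be the justification of $j_{k-1} = 0$: the LCSM$^+$ blocking definition demands class-feasibility at $h_k$, which is not automatic from $h_k$ merely being under-subscribed, since laminar class quotas may be tight. The argument routes through ${\bf corr}(r_{k-1}) = \bot$ and a class-by-class count (essentially the same mechanism used in Lemma~\ref{lem:rho_xor_feasible_g}) to guarantee strict slack $|M(h_k) \cap C| < q(C)$ in every class of $T(h_k)$ containing $r_{k-1}$. Once this is in hand, the proof is a clean application of Lemma~\ref{lem:no-jump-transition} combined with the pigeonhole over the $n_1$ distinct residents appearing on $P$.
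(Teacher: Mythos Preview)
Your argument is correct and follows essentially the same route as the paper: map the hypothetical augmenting path $P$ into $G_{n_1}$, pin the first two residents at level $n_1-1$ and the last at level $0$, then use Lemma~\ref{lem:no-jump-transition} to bound level drops by one per step and obtain a contradiction with $k \le n_1$. Your treatment of $j_{k-1}=0$ is in fact more careful than the paper's (which appeals only to $h_k$ being under-subscribed): your class-by-class count from ${\bf corr}(r_{k-1})=\bot$ is exactly what is needed to certify that $M_{n_1}(h_k)\cup\{r_{k-1}^0\}$ is feasible in the LCSM$^+$ sense, and it goes through because any class $C$ containing $r_{k-1}$ also contains the chosen $C_f$ at every iteration in which a resident of $(M(h_k)\setminus M^*(h_k))\cap C$ is processed.
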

\begin{proof}
For the sake of contradiction assume that the path $P = \langle r_0, h_1, r_1, \ldots, h_{k-1}, r_{k-1}, h_k \rangle$
is an augmenting path where for each $i = 1,\ldots, (k-1)$, $M(r_i) = h_i$.
Here $r_0$ is unmatched in $M$, and $h_k$ is under-subscribed in $M$.
The existence of $P$ in $\mathcal{Y}_{M \oplus M^*}$
implies that there exists an $M_{n_1}$ augmenting path
$P_{n_1} = map^{-1}(P, M_{n_1}) = \langle r_0^{j_0}, h_1, r_1^{j_1}, \ldots, h_{k-1}, r_{k-1}^{j_{k-1}}, h_k \rangle$
in $G_{n_1}$, and for $t = 0, \ldots, k-1$, $j_t \in \{0, \ldots , n_1-1\}$,
where $h_i = M_{n_1}(r_i^{j_i})$.

Since $r_0$ is unmatched in $M$, by invariant ($\mathcal{I}_3$), it implies
that for $0 \le i \le n_1-2$, $M_{n_1}(r_0^i) = d_{r_0}^i$, and $r_0^{n_1-1}$ is unmatched in $M_{n_1}$.
This implies that the first resident in the path $P_{n_1}$ is a level-$(n_1-1)$ resident $r_0^{n_1-1}$.
The second resident on the path $P_{n_1}$ also has to be a level-$(n_1-1)$ resident.
If not, then the edge $(r_0^{n_1-1}, h_1)$ is labeled $(1, 1)$ since $h_1$ prefers
$r_0^{n_1-1}$ to any resident at a level lower than $n_1-1$ and $r_0^{n_1-1}$ is unmatched in $M_{n_1}$.
The last resident in the path $P_{n_1}$ is a level-$0$ resident i.e. $r_{k-1}^{j_{k-1}} = r_{k-1}^0$.
If not, then the edge $(r_{k-1}^0, h_k)$ is labeled $(1, 1)$, as $r_{k-1}^0$ is matched to
the last dummy hospital ($d_{r_{k-1}}^0$) on its preference list (by invariant ($\mathcal{I}_3$)),
and $h_k$ is under-subscribed in $M_{n_1}$.

Thus, in the path $P_{n_1}$, the first two residents are level-$(n_1-1)$,
while the last resident is level-$0$.
Recall that the path $P_{n_1}$ was obtained as an inverse-map of the path $P$ in $G$.
Since the path $P$ contains at most $n_1$ residents (possibly all of the residents in $G$),
the path $P_{n_1}$ also contains at most $n_1$ residents.
From Lemma~\ref{lem:no-jump-transition} we observe that the difference in the levels
of two residents in a sub-path of $P_{n_1}$ can be at most one.
Thus, it must be the case that residents at all the levels $n_1-1$ to $0$
are present in $P_{n_1}$.
However, since there are two residents at level-$(n_1-1)$ (first two residents)
and one resident at level-$0$ (last resident), it is clear that residents at
all levels from $n_1-1$ to $0$ cannot be accommodated in a path containing
at most $n_1$ residents.

This contradicts the existence of such a path $P_{n_1}$ in $G_{n_1}$ which implies
that the assumed augmenting path $P$ with respect to $M$ cannot exist.
This proves that $M = map(M_{n_1})$ is a max-cardinality matching in $G$.
\qed
\end{proof}

We can now conclude that the set $\mathcal{Y}_{M \oplus M^*}$ is a set of
alternating (and not augmenting) paths and alternating cycles.
It remains to show that $M$ is popular
amongst all maximum cardinality feasible matchings in $G$.
Let $M^*$ be any feasible maximum cardinality matching in $G$.
In Lemma~\ref{lem:one-one} we show that if there is an edge $(r, h) \in M^* \setminus M$ labeled $(1,1)$ in $\rho$,
then in  $\rho_{n_1}$, for the hospital $h$,  the level of its unmatched
neighbour (resident) is lower than the level of its matched neighbour (resident).

\begin{lemma}
\label{lem:one-one}
If an edge $(r_{a}, h) \in \rho$ is labeled $(1, 1)$,
then in $\rho_{n_1}$ for the sub-path $\langle r_{a}^{j_{a}}, h, r_b^{j_b} \rangle$
where $M_{n_1} (r_b^{j_b}) = h$,
we have $j_{a} < j_b$.
\end{lemma}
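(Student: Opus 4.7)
The plan is to argue by contradiction: assume $j_a \ge j_b$ and split on whether $j_a > j_b$ or $j_a = j_b$, deriving a contradiction in each case.

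The case $j_a > j_b$ is immediate from Lemma~\ref{lem:no-jump-transition}(b), which under this assumption forces the associated edge $(r_a, h) \in \rho$ to carry the label $(-1,-1)$. This directly contradicts the hypothesis that $(r_a, h)$ is labeled $(1,1)$.

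For the case $j_a = j_b$, the strategy is to exhibit $(r_a^{j_a}, h)$ as a blocking pair for $M_{n_1}$ in $G_{n_1}$, contradicting stability of $M_{n_1}$; equivalently, one observes that $(r_a^{j_a}, h) \in \rho_{n_1}$ would then be labeled $(1,1)$ and invoke Lemma~\ref{lem:blocking-pair-ms}(1). Two ingredients are required. First, $r_a^{j_a}$ prefers $h$ over $M_{n_1}(r_a^{j_a})$: from the $(1,1)$ label in $G$ we know $r_a$ prefers $h$ to $M(r_a)$; if $r_a$ is matched in $M$ then by definition of $map^{-1}$ we have $M_{n_1}(r_a^{j_a}) = M(r_a)$, and residents preserve their preferences over non-dummy hospitals in $G_{n_1}$; if $r_a$ is unmatched in $M$ then $j_a = n_1-1$, and by invariant $(\mathcal{I}_3)$ the copy $r_a^{n_1-1}$ is unmatched in $M_{n_1}$, so the preference holds trivially. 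Second, $h$ prefers $r_a^{j_a}$ over $r_b^{j_b}$: since $j_a = j_b$, the two copies lie at the same level in $h$'s preference list in $G_{n_1}$, and within a level $h$ orders residents exactly as in $G$, where the $(1,1)$ label gives $h$'s preference for $r_a$ over $r_b = {\bf corr}(r_a)$. Feasibility of $(M_{n_1}(h) \setminus \{r_b^{j_b}\}) \cup \{r_a^{j_a}\}$ in $G_{n_1}$ follows from Corollary~\ref{lem:rho_xor_feasible_gs}.

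The main subtle point is handling the matched/unmatched dichotomy of $r_a$ uniformly in the first ingredient, together with checking that Lemma~\ref{lem:no-jump-transition}(b) applies even when $h$ has degree greater than two in $\rho_{n_1}$---but the latter proof reasons only locally about edges incident to $h$ via Lemma~\ref{lem:blocking-pair-ms}, so this causes no trouble. Combining the two contradictions yields $j_a < j_b$, as required.
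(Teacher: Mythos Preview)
Your proof is correct and follows essentially the same approach as the paper: both reduce to contradicting the stability of $M_{n_1}$ via Lemma~\ref{lem:blocking-pair-ms}(1), using that the $(1,1)$ label in $G$ forces $r_a^{j_a}$ to prefer $h$ and (when $j_a \ge j_b$) $h$ to prefer $r_a^{j_a}$ over $r_b^{j_b}$. The only difference is cosmetic: the paper treats the cases $j_a=j_b$ and $j_a>j_b$ uniformly by directly exhibiting $(r_a^{j_a},h)$ as a $(1,1)$ edge in $\rho_{n_1}$, whereas you dispatch $j_a>j_b$ by citing Lemma~\ref{lem:no-jump-transition}(2); your extra care about the matched/unmatched status of $r_a$ is also fine (and slightly more explicit than the paper).
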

\begin{proof}
Let an edge $e = (r_{a}, h)$ be labeled $(1, 1)$ in $\rho$.
We observe the following about preferences of $r_{a}$ and $h$ in $G$.

\begin{itemize}
\item [($\mathcal{O}_1$)] $r_{a}$ prefers $h$ over $M(r_a)$.
\item [($\mathcal{O}_2$)] $h$ prefers $r_{a}$ over $r_b \in M(h)$,
where $r_b = {\bf corr}(r_{a})$.
\end{itemize}

Consider the edge $e' = (r_{a}^{j_{a}}, h)$ in $G_{n_1}$, as
$h = M_{n_1}(r_b^{j_b})$ it implies $e' \notin M_{n_1}$.
Consider the three cases that can arise depending on the values of $j_a$ and $j_{b}$.

\vspace{0.1in}
\begin{enumerate}
\begin{minipage}{0.3\linewidth}
\item $j_{a} = j_b$
\end{minipage}
\begin{minipage}{0.3\linewidth}
\item $j_{a} > j_b$
\end{minipage}
\begin{minipage}{0.3\linewidth}
\item $j_{a} < j_b$
\end{minipage}
\end{enumerate}

Recall observation $(\mathcal{O}_1)$, and the fact that the residents do not change their
preferences in $G_{n_1}$ w.r.t. the hospitals originally in $G$.
This implies in all the three cases above, the resident $r_{a}^{j_{a}}$
prefers $h$ over $M_{n_1}(r_{a}^{j_{a}})$.
Using $(\mathcal{O}_2)$ and the fact that a hospital $h$ in $G_{n_1}$ prefers level-$p$
residents over level-$q$ residents, when $p > q$, we can conclude the following.
For the cases $(1)$ and $(2)$, hospital $h$ prefers $r_{a}^{j_{a}}$ over $r_b^{j_b}$,
which implies that the pair $(r_{a}^{j_{a}}, h)$ is labeled $(1, 1)$, which
is a blocking pair for $M_{n_1}$ (using Lemma~\ref{lem:blocking-pair-ms}(1)).
This contradicts the stability of $M_{n_1}$.
We therefore conclude that $j_{a} < j_b$.
\qed
\end{proof}

Lemma~\ref{lem:alt-path-resident-maxc} shows that in an alternating path in $\mathcal{Y}_{M \oplus M^*}$
with exactly one endpoint unmatched in $M$ or an alternating cycle, the number
of edges labeled $(1,1)$ cannot exceed the number of edges labeled $(-1,-1)$.

\begin{lemma}
\label{lem:alt-path-resident-maxc}
Let $\rho$ be an alternating path or an alternating cycle in $\mathcal{Y}_{M \oplus M^*}$.
Then the number of edges labeled $(1,1)$ in $\rho$ is at most the number of
edges labeled $(-1,-1)$.
\end{lemma}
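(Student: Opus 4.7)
My plan is to assign each $M^*$ edge $e=(r_a,h)$ of $\rho$ the quantity $\delta(e):=j_b-j_a$ (using the notation of Lemma~\ref{lem:no-jump-transition}), show that $\sum_e\delta(e)\le 0$, and combine this with label-wise lower bounds on $\delta$ to conclude. For every \emph{interior} $M^*$ edge (one where the hospital has degree two in $\rho_{n_1}$ at this edge), Lemma~\ref{lem:one-one} gives $\delta(e)\ge 1$ whenever $e$ is labeled $(1,1)$, and Lemma~\ref{lem:no-jump-transition} gives $\delta(e)\ge -1$ always, with $\delta(e)=-1$ only if $e$ is labeled $(-1,-1)$; the remaining labels $(1,-1)$ and $(-1,1)$ force $\delta(e)\ge 0$. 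Summing over the interior $M^*$ edges therefore yields
\[
\sum_{e\text{ interior}}\delta(e) \;\ge\; n^{\text{int}}_{(1,1)} - n^{\text{int}}_{(-1,-1)}.
\]

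I next classify the shapes of $\rho$ in $\mathcal{Y}_{M\oplus M^*}$. Because $|M|=|M^*|$ (both are maximum cardinality, by Lemma~\ref{lem:no-aug-path-maxc} together with the hypothesis on $M^*$) and no $M$-augmenting path exists, every alternating path in $\mathcal{Y}_{M\oplus M^*}$ is balanced. Hence there are exactly three cases: an alternating cycle; a \emph{Type~I} path with two resident endpoints, exactly one unmatched in $M$; or a \emph{Type~II} path with two hospital endpoints, one carrying an $M^*$-edge and the other an $M$-edge. (Paths with exactly one hospital endpoint have an odd number of edges, hence are unbalanced and cannot occur.) For an endpoint $M^*$ edge $(r_a,h)$ of a Type~II path we have ${\bf corr}(r_a)=\bot$, so the $h$-vote compares $r_a$ with $\bot$ and equals $+1$; this rules out the $(-1,-1)$ label, and I will argue below that the $(1,1)$ label is impossible too, so endpoint $M^*$ edges contribute to neither count and it suffices to track the interior ones.

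The sum of $\delta$ now telescopes in each case. Cycles give $\sum\delta=0$. Type~I telescopes to $\ell_M-\ell_{M^*}$, where $\ell_{M^*}=n_1-1$ by invariant $(\mathcal{I}_3)$ (that endpoint is unmatched in $M$) and $\ell_M\le n_1-1$, so the sum is $\le 0$. For a Type~II path, WLOG the first edge is $(h_s,r^{(1)})\in M^*$; since ${\bf corr}(r^{(1)})=\bot$ at $h_s$, every ancestor class $C$ of $r^{(1)}$ at $h_s$ must satisfy $|M(h_s)\cap C|<|M^*(h_s)\cap C|\le q(C)$ (otherwise the correspondence algorithm would have paired $r^{(1)}$ in some such $C$), so no such class is full in $M_{n_1}$ and $h_s$ itself is under-subscribed in $M_{n_1}$. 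If $\ell_1>0$, invariant $(\mathcal{I}_3)$ places $r^{(1)}_{0}$ at the worst-choice dummy $d_{r^{(1)}}^{0}$ in $M_{n_1}$; then $(r^{(1)}_{0},h_s)$ is feasible (by the class counts above) and $r^{(1)}_{0}$ prefers $h_s$, so the pair blocks $M_{n_1}$, contradicting its stability. Hence $\ell_1=0$, and the Type~II telescope gives $\sum\delta=\ell_1-\ell_k=-\ell_k\le 0$. Re-running the same argument with the copy $r^{(1)}_{0}$ itself (which is now matched to $h^{(1)}$ in $M_{n_1}$) forces $r^{(1)}$ to prefer $h^{(1)}=M(r^{(1)})$ over $h_s$, so the endpoint $M^*$-edge is labeled $(-1,+1)$ and thus not $(1,1)$; the case where the Type~II path starts with an $M$-edge is handled symmetrically at the other hospital endpoint. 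The main obstacle is this Type~II analysis, which simultaneously pins down $\ell_1=0$ and rules out the endpoint $(1,1)$ label, both by combining stability of $M_{n_1}$ with the class-count consequence of ${\bf corr}(r^{(1)})=\bot$. Putting everything together, $n_{(1,1)}-n_{(-1,-1)}\le\sum\delta\le 0$, proving the lemma.
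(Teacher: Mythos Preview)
Your proof is correct and follows essentially the same approach as the paper: both track the level changes along $\rho_{n_1}$ and use Lemma~\ref{lem:one-one} (each $(1,1)$ edge forces a level increase) together with Lemma~\ref{lem:no-jump-transition} (each level decrease is by exactly one and corresponds to a $(-1,-1)$ edge), combined with boundary-level information at the endpoints. Your formalization via $\delta(e)=j_b-j_a$ and telescoping is a cleaner packaging of the paper's ``number of increases $\le$ number of decreases'' argument, and your treatment of the Type~II endpoint is in fact more careful than the paper's: you explicitly justify, via the class-count consequence of $\mathbf{corr}(r^{(1)})=\bot$, that $M_{n_1}(h_s)\cup\{r^{(1)}_0\}$ is feasible for $h_s$ (needed for the blocking-pair contradiction), whereas the paper's Claim~2 glosses over this and invokes Lemma~\ref{lem:blocking-pair-ms}(2) at a hospital where no $r_b$ exists.

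One small point worth tightening: your sentence ``Because $|M|=|M^*|$ \ldots\ and no $M$-augmenting path exists, every alternating path in $\mathcal{Y}_{M\oplus M^*}$ is balanced'' deserves one extra line. Parity of the bipartition already forces resident--resident and hospital--hospital paths to be balanced; resident--hospital paths are the only unbalanced ones, and those with an excess $M^*$ edge are $M$-augmenting (ruled out by Lemma~\ref{lem:no-aug-path-maxc}), while those with an excess $M$ edge are $M^*$-augmenting and are ruled out because $M^*\oplus\rho$ is feasible (Lemma~\ref{lem:rho_xor_feasible_g} applies symmetrically) and $M^*$ is maximum. With that spelled out, your case split into cycles, Type~I, and Type~II is exhaustive. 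The remark about the ``$M$-edge start'' for Type~II is unnecessary---that endpoint carries no labeled edge---but it does no harm.
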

\begin{proof}
Depending on the nature of $\rho$ we have three different cases.
\begin{itemize}
\item  $\rho$ is an alternating path which starts with an unmatched resident in $M$.
\item  $\rho$ is an alternating path which starts with a hospital which is under-subscribed in $M$.
\item  $\rho$ is an alternating cycle.
\end{itemize}
The proof idea is similar in all the three cases.
In each of the above, we consider $\rho_{n_1} = map^{-1}(\rho, M_{n_1})$.
For every edge labeled $(1, 1)$ in $\rho$ we show a change (increase / decrease)
in the level of the residents which are neighbours of a particular hospital.
We show that each such change must be complemented with another
change (decrease / increase resp.) in the level of the residents which are neighbours to some other hospital.
Finally, we show that the second type of change translates to a $(-1, -1)$ edge in $\rho$.

Let $\rho$ be an alternating cycle.
Consider a hospital $h_i \in \rho$ for which there is an edge $(r, h)$
labeled $(1, 1)$ incident on it in $\rho$.
Consider the associated hospital $h_i \in \rho_{n_1}$.
W.l.o.g. let $\langle r_{i-1}^{j_{i-1}}, h_i, r_i^{j_i} \rangle$ be a sub-path
of $\rho$ when traversing $\rho_{n_1}$ in counter-clock-wise direction.
By Lemma~\ref{lem:one-one}, we know that the level of the unmatched resident
incident on $h_i$ is lower than the level of the matched resident incident on $h_i$.
Thus there is an {\it increase} in level of residents when at $h_i$
(while traversing $\rho_{n_1}$ in counter-clockwise direction).
This is true for any $h_k \in \rho_{n_1}$ where the associated hospital in $\rho$ has a
$(1, 1)$ edge incident on it.
We now recall from Lemma~\ref{lem:no-jump-transition}(2) that whenever
a hospital $h_k \in \rho_{n_1}$ has a level {\it decrease}, the associated
edge in $\rho$ is labeled $(-1, -1)$.
Furthermore the {\it decrease} in levels at a hospital is gradual.
Thus, it must be the case that the number of $(1, 1)$ edges in $\rho$
is at most the number of $(-1, -1)$ edges in $\rho$.

In case $\rho$ is an alternating path starting at an unmatched resident, we show that in the path $\rho_{n_1}$
the first two residents are level-$(n_1-1)$ residents (see {\bf Claim 1} below for a proof).
Furthermore, consider the first edge $(r, h) \in \rho$ that is labeled $(1, 1)$.
The associated hospital $h_i$ has a {\it increase} in the level of its two neighbouring residents. However, since $\rho_{n_1}$
started with two level-($n_1-1$) residents (which is the highest level possible).
Therefore, there must have been some hospital $h_k$ preceding $h_i$ in $\rho_{n_1}$ which has a {\it decrease} in the
levels of the two neighbours. Using these facts it is easy to prove the following: \\ 
Number of $(1, 1)$ edges in $\rho \le$ Number of increases in $\rho_{n_1} \le$ Number of decreases in $\rho_{n_1} \le$ Number of $(-1,-1)$ edges in $\rho$.

\noindent This completes the proof in case $\rho$ is a path starting at an unmatched resident.

Finally, we are left with the case when $\rho$ is an alternating path which starts with a hospital $h_i$ which is under-subscribed in $M$.
We show that in the associated path $\rho_{n_1}$, the first resident is a level-0 resident (see {\bf Claim 2} below for a proof).
Note that in this case since the path starts at a hospital, whenever we have an edge labeled $(1, 1)$ in $\rho$,
the associated hospital $h_i$ in $\rho_{n_1}$ has an {\it increase} in the levels of the two neighbouring residents.
Now first resident in the path is at the lowest possible level,
it must be the case that there is a hospital $h_k$ preceding $h_i$ in $\rho$ for which there is a {\it decrease} in the level of
the neighbouring residents. Now using arguments similar to those in the case of path starting at an unmatched resident, we conclude
that the number of $(1,1)$ edges in $\rho$ is at most the number of $(-1, 1)$ edges in $\rho$.

{\noindent}{\bf Claim 1:} $\rho = \langle r_0, h_1, r_1, \ldots \rangle$ starts with an unmatched resident.
As $r_0$ is unmatched in $M$, by invariant ($\mathcal{I}_3$), it implies
that for $0 \le i \le n_1-2$, $M_{n_1}(r_0^i) = d_{r}^i$, and $r_0^{n_1-1}$ is unmatched in $M_{n_1}$.
Therefore the first resident $r_0^{j_0}$ on the path $\rho_{n_1}$ is a level-$(n_1-1)$
resident, that is $j_0 = n_1 - 1$.
Furthermore, the second resident on the path $r_1$ has to be a level-$(n_1-1)$ resident.
If not, then as $r_0^{n_1-1}$ is unmatched in $M_{n_1}$ and $h_1$ prefers a level-$(n_1-1)$ resident
to a level-$v$ resident ($v < n_1-1$), the edge $(r_0^{n_1-1}, h_1)$ will be labeled $(1, 1)$,
and thus blocks $M_{n_1}$ contradicting its stability (using Lemma~\ref{lem:blocking-pair-ms}(1)).

{\noindent}{\bf Claim 2:} $\rho = \langle h_0, r_1, h_1, \ldots \rangle$ starts with an under-subscribed hospital.
The first resident $r_1^{j_1}$ on the path $\rho_{n_1}$ has to be a level-$0$
resident, that is $j_1 = 0$.
If not, i.e. if $j_1 = 1$, then $d_{r_0}^0 = M_{n_1}(r_1^0)$
(by invariant ($\mathcal{I}_3$)) $d_{r_0}^0$ in $M_{n_1}$, and prefers $h_1$ to $d_{r_0}^0$.
The hospital $h_1$ on the other hand is under-subscribed in $M_{n_1}$ and prefers
being matched to $r_1^0$ in $M_{n_1}$.
Thus, the edge $(r_1^0, h_1)$ is labeled $(1, 1)$, and blocks $M_{n_1}$ contradicting
its stability (using Lemma~\ref{lem:blocking-pair-ms}(2)).
\qed
\end{proof}

Thus, we get the following theorem:
\begin{theorem}
\label{thm:pop2}
Let $M = map(M_{n_1})$ where $M_{n_1}$ is a stable matching in $G_{n_1}$.
Then $M$ is a popular matching amongst all maximum cardinality matchings in $G$.
\end{theorem}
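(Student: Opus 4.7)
The plan is to prove Theorem~\ref{thm:pop2} in two independent stages: first, that $M = map(M_{n_1})$ is a maximum-cardinality feasible matching in $G$; and second, that among all maximum-cardinality feasible matchings, no $M^*$ satisfies $\Delta(M^*, M) > 0$. Both stages reduce to previously established lemmas combined with a clean accounting of the labels on edges of $M^* \setminus M$.

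For the first stage, I would fix an arbitrary maximum-cardinality feasible matching $M^*$ in $G$ and consider the decomposition $\mathcal{Y}_{M \oplus M^*}$. By Lemma~\ref{lem:no-aug-path-maxc}, this collection contains no augmenting path with respect to $M$. Had such a path $P$ existed, $M \oplus P$ would be feasible by Lemma~\ref{lem:rho_xor_feasible_g} and strictly larger than $M$, contradicting the maximum cardinality of $M^*$. Hence $|M| \ge |M^*|$, so $M$ itself is of maximum cardinality.

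For the second stage, having $|M| = |M^*|$ in hand, I would use the label-based rewriting of $\Delta(M^*, M)$ introduced in Section~\ref{sec:sec2}:
\[
\Delta(M^*, M) \;=\; -|\mathcal{U}| \;+\; \sum_{\rho \in \mathcal{Y}_{M \oplus M^*}} \; \sum_{(r,h) \in M^* \cap \rho} \bigl[ vote_r(h, M(r)) + vote_h(r, {\bf corr}(r)) \bigr].
\]
Under this decomposition, each $(1,1)$-labeled edge contributes $+2$, each $(-1,-1)$-labeled edge contributes $-2$, and edges labeled $(1,-1)$ or $(-1,1)$ contribute $0$. The $\mathcal{U}$ term is manifestly non-positive. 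Therefore it suffices to show that within every $\rho \in \mathcal{Y}_{M \oplus M^*}$ the number of $(1,1)$-edges does not exceed the number of $(-1,-1)$-edges, which is exactly the content of Lemma~\ref{lem:alt-path-resident-maxc}. Summing over all $\rho$ then yields $\Delta(M^*, M) \le 0$, proving that $M$ is popular among maximum-cardinality matchings.

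The main technical obstacle is already absorbed into Lemma~\ref{lem:alt-path-resident-maxc}, which in turn rests on Lemmas~\ref{lem:no-jump-transition} and~\ref{lem:one-one}: these track how the \emph{levels} of residents in $G_{n_1}$ evolve at each hospital along an alternating path or cycle, ensuring that every level-\emph{increase} (forced by a $(1,1)$-edge) is matched against a gradual level-\emph{decrease} (which can only occur across a $(-1,-1)$-edge). Once those level-accounting lemmas are in place, the theorem itself is a short aggregation of edge contributions and requires no further instance-specific argument.
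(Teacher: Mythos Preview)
Your proof is correct and matches the paper's (implicit) argument: the paper simply states the theorem immediately after Lemma~\ref{lem:alt-path-resident-maxc}, relying on exactly the two-stage reduction you describe---Lemma~\ref{lem:no-aug-path-maxc} for maximum cardinality, then Lemma~\ref{lem:alt-path-resident-maxc} together with the edge-label rewriting of $\Delta(M^*,M)$ for popularity. One sentence in Stage~1 is slightly off---the existence of an augmenting path $P$ yielding $|M \oplus P| > |M|$ would not by itself contradict the maximality of $M^*$---but your argument actually rests on Lemma~\ref{lem:no-aug-path-maxc} directly (no augmenting path implies $|M| \ge |M^*|$ by the standard counting over $\mathcal{Y}_{M \oplus M^*}$), so the conclusion stands.
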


\noindent {\bf Discussion:} A natural question is to consider popular matchings
in LCSM instances. 
An LCSM instance need not admit a stable matching.
However we claim that restricted to LCSM instances which admit a stable matching, our
results hold without any modification. To obtain the result, we claim that Lemma~\ref{lem:rho_xor_feasible_g} holds  in the presence
of lower quotas on classes. Additionally, if the given LCSM instance $G$ 
admits a stable matching, the graph $G_s$ for $s = 1, \ldots, n_1$ also admits a stable matching.
We thank Prajakta Nimbhorkar for pointing this to us.

\noindent {\bf Acknowledgement:} We thank the anonymous reviewers whose comments have improved the presentation.

\bibliographystyle{plain}
 \bibliography{references-trimmed}

\appendix
\section{Appendix}
\subsection{Reduction from a PCSM$^+$ instance to an SPA instance}
\label{appendix:reduction_pcsm_spa}
An instance of SPA~\cite{Abraham2007} consists of students, projects and lecturers.
Each lecturer has an upper bound on the maximum number of students that he/she is willing to advise.
Each project has an upper bound on the number of students it can accommodate.
Each project is owned by exactly one lecturer.
Each student has a preference ordering over a subset of the projects,
and each lecturer has a preference over the students.

We detail on the reduction from PCSM$^+$ instance to an SPA instance here.
For a resident $r$ in the PCSM$^+$ instance, a corresponding student $s_r$ is
introduced in the SPA instance.
For each hospital $h$, a lecturer $l_h$ with capacity $q(h)$
is added in the SPA instance.
For each class $C^h_j$ in the classification provided by a hospital $h$,
a project $p_j$ is associated with the lecturer $l_h$,
and the upper-bound of $p_j$ is equal to $q(C^h_j)$.
The preference list of $l_h$ is obtained from its corresponding hospital $h$.
If the resident $r$ is the $k$-th most preferred resident in the preference list
of $h$, then the student $s_r$ is the $k$-th most preferred student in the
preference list of $l_h$.
Similarly, the preference list of a student $s_r$ is created from its
corresponding resident $r$.
Let $C^h_j$ be the class that the resident $r$ appears in the classification
provided by the $k$-th most preferred hospital in its preference list,
then $p_j$ is the $k$-th most preferred project in the preference list for $s_r$.
As the classifications associated with every hospital in the PCSM$^+$ instance
are a partition over its preference list, there is no ambiguity in describing
the preference of the students.

\end{document}